\documentclass[a4paper]{article}
\usepackage[margin=20mm]{geometry}
\usepackage{listings}
\usepackage{amsmath}
\usepackage{amsthm}
\usepackage{caption,subcaption}
\usepackage {array}
\usepackage{mdwmath}
\usepackage{multirow}
\usepackage{mdwtab}
\usepackage{eqparbox}
\usepackage{multicol}
\usepackage{amsfonts}
\usepackage{multirow,bigstrut,threeparttable}
\usepackage{array}
\usepackage{bbm}
\usepackage{orcidlink}
\usepackage{epstopdf}
\usepackage{mdwmath}
\usepackage{mdwtab}
\usepackage{eqparbox}
\usepackage{tikz}
\usetikzlibrary{positioning, calc, shapes.geometric, shapes.multipart,
  shapes, arrows.meta, arrows,
  decorations.markings, external, trees, backgrounds,positioning,fit}
\usepackage{tikz-cd}

\usepackage{xr}
\usepackage{latexsym}
\usepackage{amssymb}
\usepackage{bm}
\usepackage{amssymb}
\usepackage{graphicx}
\usepackage{mathrsfs}
\usepackage{mathtools}
\usepackage{epsfig}
\usepackage{psfrag}
\usepackage{setspace}
\usepackage{algorithm}
\usepackage{algorithmic}
\usepackage{csquotes}

\usepackage[backend=bibtex,
citestyle=authoryear-comp,
bibstyle=authoryear-comp,
autocite=plain,
sorting=nyt,
hyperref=true,
maxcitenames=2,
maxbibnames=100,
isbn=false,
url=false,
doi=false]{biblatex}
\addbibresource{ref.bib}

\usepackage{hyperref}
\hypersetup{%
    pdfborder = {0 0 0}
}
\usepackage{cleveref}

\usepackage{soul}

\usepackage{soul}
\newcommand{\revc}[1]{{\color{black} #1}}
\newcommand{\revd}[1]{{\color{black} #1}}

\usepackage{preamble-tweak-2}

\title{Selective Randomization Inference for Adaptive Experiments}
\author{Tobias Freidling\orcidlink{0000-0003-0724-4297}\textsuperscript{1} \qquad Qingyuan Zhao\orcidlink{0000-0001-9902-2768}\textsuperscript{2}\footnote{Correspondence to: qyzhao@statslab.cam.ac.uk; Centre for Mathematical Sciences, Wilberforce Road, Cambridge, CB3 0WB, UK}\qquad Zijun Gao\orcidlink{0000-0003-4863-1656}\textsuperscript{3}}
\date{{\textsuperscript{1}Department of Mathematics, \'{E}cole polytechnique f\'{e}d\'{e}rale de Lausanne, Switzerland}\\
{\textsuperscript{2}Statistical Laboratory, DPMMS, University of
    Cambridge, United Kingdom}\\
  {\textsuperscript{3}Marshall School of Business, University of Southern California, USA}
}

\addtolength{\footnotesep}{1mm}
\begin{document}
\def\spacingset#1{\renewcommand{\baselinestretch}%
{#1}\small\normalsize}
\spacingset{2}

\maketitle

\begin{abstract}
  Adaptive experiments use preliminary analyses of the data to inform further course of action and
  are commonly used in many disciplines including medical and social sciences. Because the null hypothesis and experimental design are \revc{data-dependent}, it has long been recognized that statistical inference for adaptive experiments is not straightforward. Most existing methods only apply to specific adaptive designs and rely on strong assumptions. In this work, we propose selective randomization inference as a general framework for analysing adaptive experiments. In a nutshell, our approach applies conditional post-selection inference to randomization tests. By using directed acyclic graphs to describe the data generating process, we derive a selective randomization p-value that controls the selective type-I error. \revc{As inference only relies on the randomness in the treatment assignment, no modelling assumptions or independent and identically distributed data are needed.}
  \revc{We elaborate on conditions that render the proposed p-value computable and provide rejection sampling and MCMC algorithms to find a Monte Carlo approximation. Moreover, this article shows how to estimate and construct confidence intervals for a homogeneous treatment effect.}
  Lastly, we demonstrate our method and compare it with other randomization tests using synthetic and real-world data.
\end{abstract}

\noindent%
{\it Keywords:} Causal inference, enrichment design, randomization test, selection bias, selective inference,\newline response-adaptive randomization

\newpage
\section{Introduction}\label{sec:introduction}

In a talk in 1956 titled ``Iterative Experimentation'', the great
statistician George Box suggested that we should keep in mind that
scientific research is usually an iterative process. He wrote in the
abstract:
\begin{quote}
  The cycle: conjecture-design-experiment-analysis leads to a new cycle
  of conjecture-design-experiment-analysis and so on. [...] Although this
  cycle is repeated many times during an investigation, the experimental
  environment in which it is employed and the techniques appropriate for
  design and analysis tend to change as the investigation
  proceeds. \hfill \parencite{box1957}
\end{quote}
John Tukey, another great statistician and Box's contemporary,
acknowledged that Box's cycle ``will serve excellently'' if ``an
oversimple paradigm is to be selected'', but he noted that ``the point
where the circle is to be broken can be freely chosen'' and ``there
are short cuts and repeated steps as well as branchings and
rejoinings'' \parencite{tukey1961}. 
Thus, although the demarcation in
Box's cycle is very useful to understand the role of statistics in
different phases of experimentation, scientific research in the real
world does not always follow this linear process.

In practice, it is often desirable to continuously revise the
scientific hypothesis and design of the study as more data are gathered.
This has motivated two long-standing themes of research in statistics:
(i) \emph{selective inference}, where valid inference is required on potential
findings that are discovered after viewing the data, and (ii) \emph{adaptive
experimentation}, where data-adaptive designs are used to allow us to
collect more relevant data, save time and resources, and/or assign
more participants to a better treatment.

As the scale of datasets grows and increasingly precise conclusions are
sought, selective inference and adaptive experimentation have
attracted much attention in the recent literature. However, there is,
strikingly, little work on applying selective inference to adaptively
collected data. In this article, we attempt to fill this gap and
develop a general method called ``selective randomization inference''
that can be applied to a wide range of adaptive experiments. For the
rest of this section, we will introduce our proposal and then put it
in the broader context set up by Box and Tukey.

\subsection{Overview of the Selective Randomization Test}
\label{sec:overv-select-rand}

\begin{figure}[t]
  \centering
  \includegraphics[scale=0.185]{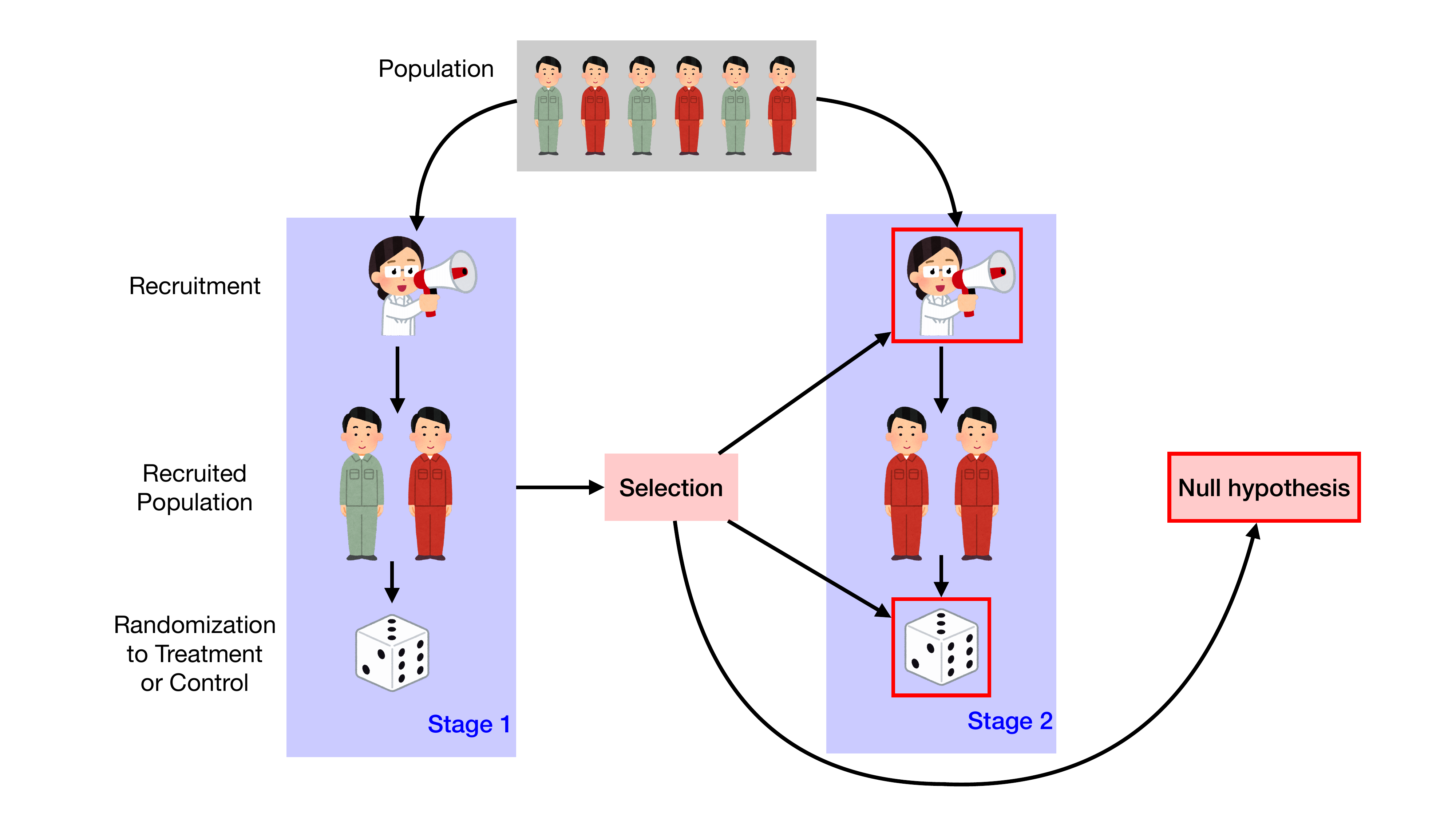}
  \caption{Schematic of a two-stage enrichment trial with two subgroups: low genetic risk score (green) and high genetic risk score (red).}
  \label{fig:trial-pictogram}
\end{figure}

To motivate the use of adaptive experiments, illustrate potential pitfalls, and explain our proposal, we introduce a hypothetical two-stage enrichment trial, which is a design that allows the experimenter to ``over-sample'' a subgroup of interest in the second stage. This running example is motivated by a post hoc analysis by
\textcite{marston_predicting_2020} of the FOURIER trial (Further
Cardiovascular Outcomes Research With PCSK9 Inhibition in Subjects With
Elevated Risk) that was brought to our attention by an applied
collaborator. In their work, \citeauthor{marston_predicting_2020} found
that the new evolocumab therapy (PCSK9 inhibition) appears to benefit patients
with high genetic risk much more than others. However, because of the
post hoc nature of this subgroup analysis, further randomized studies
are required to confirm this conclusion. Therefore, we consider an experiment with two stages that can adapt to possible effect heterogeneity; a graphical illustration of the example can be found in Figure~\ref{fig:trial-pictogram}.

\begin{example}\label{ex:enrichment}
    Consider a hypothetical two-stage enrichment trial in which the
    patient population is divided at the beginning of the study into a low
    genetic risk group and a high genetic risk group. In the first stage
    of the trial, patients are recruited from both groups and randomly
    assigned the new therapy ($Z_{1i}=1$) or the placebo ($Z_{1i}=0$,
    where $i$ is the patient index). Based on a preliminary analysis of
    data from the first stage, we may wish to select one subgroup or both
    subgroups for further investigation in the second stage. For
    example, we choose to only recruit people with high genetic risk
    score if they show more promising responses in the first stage and
    potentially assign the new therapy with higher probabilities as
    well. The treatment assignments in the second stage are denoted by $Z_2$. After concluding the experiment, we are interested
    in testing whether there is any treatment effect for the subgroup(s)
    selected at the end of the first stage, which is a categorical
    variable denoted by $S_1 \in \calS_1 = \{\text{only high}, \text{only low},
    \text{both}\}$.
\end{example}

It has been recognized very early on that a naive analysis of such a
sequential trial tends to overstate statistical significance
\parencite{armitage_sequential_1960, anscombe_sequential_1963,
  pocock_group_1977}. To be more concrete, consider the following
randomization p-value that completely ignores selection. Let $Z = (Z_1,Z_2)$
denote all the treatment assignments and~$W$ denote the recruitment
decisions, covariates and potential outcomes of the patients. A
randomization test conditions on~$W$ and only uses the randomness in
the treatment assignment~$Z$ \revd{for inference. In adaptive experiments, however, the corresponding randomization distribution $\Prob(Z=\cdot \mid W)$ is hardly tractable due to the dependence of future recruitment decisions on past treatments. Therefore, one may try to use the sequential treatment assignment distribution~$\QQ$  that conditions the treatment assignments only on past data (see \eqref{eq:A1} below for the precise definition)} to calculate the following p-value
\revd{
\begin{equation}\label{eq:naive-rand-p-value}
  P_{\text{naive}} = \QQ(\,T(Z^*, W) \leq T(Z,W) \mid Z, W\,).
\end{equation}}
Here, $T(Z,W)$ is a test statistic that measures treatment efficacy
and $Z^* = (Z^{*}_1, Z^{*}_2)$ 
\revd{is independently drawn from the same treatment assignment distribution as~$Z$.} 
Because $Z$ and $W$ are
treated as given in \eqref{eq:naive-rand-p-value}, the probability is
computed using the distribution of $Z^{*}$, an independent realization
of the treatment assignment. \revd{Due to subgroup selection, $\QQ(Z=\cdot \mid W)$ does not equal the randomization distribution $\Prob(Z=\cdot \mid W)$ and the null hypothesis is data-dependent. Hence,} 
$P_{\text{naive}}$ is generally not a valid p-value in the sense that \revc{under the null hypothesis}
$\Prob(P_{\text{naive}} \leq \alpha)$ may be much larger than the
chosen significance level $ \alpha \in [0,1]$. This is because even
though the
treatment assignment~$Z$ is randomized, the selection ``changes'' its
distribution. Heuristically, suppose the high genetic risk group is
selected at the end of stage 1 (so $S_1 = \text{only high}$), then the
first stage data must have already shown promises for the new therapy
in the high genetic risk group, and it is no longer appropriate to use
the randomization distribution of $Z_1$ to calculate the p-value. In
other words, it is ``unfair'' to compare $Z_1$ with $Z_1^*$ generated
from the original randomization distribution because $Z_1$ has gone
through selection but $Z_1^*$ does not.

A convenient solution to this problem is to discard the first stage,
aka data splitting \parencite{cox_note_1975}. That is, we use the data
from the first stage to choose the design and null hypothesis, and
then use the second stage only for statistical inference. Mathematically,
this can be expressed as using the following randomization p-value:
\revc{
\begin{equation}\label{eq:2nd-stage-p-value}
  P_{\text{split}} = \Prob(\,T(Z^*,W) \leq T(Z,W) \mid Z, W, Z_1^*=Z_1\,).
\end{equation}
Since we condition on $\{Z_1^*=Z_1\}$, $P_{\text{split}}$ only uses the randomness in the stage II treatment assignment $Z_2$} \revd{and the randomization distribution and $\QQ$ agree.} Obviously, this p-value is valid but is inefficient because the first stage data are completely discarded in the test.

In a nutshell, our proposal is to condition on the exact amount of
information used in the selection. This is inspired by the recent
literature on conditional post-selection inference and this connection
will be further explained in \Cref{sec:related-work}. The key observation is
that the selection in the first stage $S_1 = S_1(Z_1, W)$ is a function of the treatment assignment $Z_1$ \revc{and precisely describes the information, or rather ``randomness of $Z_1$'', that is used to choose the design of the second stage.}
Our main idea is that, to
appropriately adjust for selection, one should simply condition on the
selection and use the following \emph{selective randomization
  p-value}:
\begin{equation} \label{eq:selective-p-value}
  P_{\text{sel}}=\Prob(\,T(Z^*,W) \leq T(Z,W) \mid Z, W,
  S_1(Z_1^*, W)=S_1(Z_1, W)\,).
\end{equation}
We will show that this p-value \revd{is computationally tractable and} controls the \emph{selective type-I
  error} in the sense that
\begin{equation} \label{eq:selective-error}
  \Prob(P_{\text{sel}} \leq \alpha \mid W, S_1=s_1) \leq
  \alpha,\quad
  \text{for all}~s_1\in \calS_1~\text{and}~ \alpha \in [0,1].
\end{equation}
Thus, no matter which subgroup is selected at the end of the first
stage, the false positive rate of this selective randomization test
for the selected null hypothesis is controlled. Compared to data
splitting, this test is expected to be more powerful because it uses
all ``left out information'' in the first stage that is not used by
the selection.

\subsection{Related Work}
\label{sec:related-work}

Our proposal is related to a wide range of research that offers
alternatives to the linear cycle of scientific research described by
Box. Next we will review the related literature and compare it with
our proposal.

Adaptive designs can be traced back to the work of
\textcite{thompson_likelihood_1933} and are now frequently used in
many fields of empirical
research including medicine \parencite{burnett_adding_2020}, political
science \parencite{offer-westort_adaptive_2021}, and economics
\parencite{duflo_chapter_2007, kasy_adaptive_2021}. In many
situations, the experiments have several designated stages, and within
each stage the participants may be admitted on a rolling
basis. Therefore, preliminary data from the initial participants are
often available before the study concludes. This forms the basis for
adapting the design (e.g.\ recruitment rule or treatment allocation)
in later stages of the experiment. For instance, it may be desirable
to increase the
assignment probability of more promising treatments or policies
(this is called \emph{response-adaptive randomization}, see e.g.\
\textcite{rosenberger_covariate-adjusted_2001} and
\textcite{faseru_changing_2017}), discontinue certain arms of the study
early (this is called \emph{multi-arm multi-stage design}, see e.g.\
\textcite{magirr_generalized_2012} and \textcite{sydes_issues_2009}),
or focus on the subpopulation of participants that exhibit the best
reactions (this is called \emph{enrichment design}, see e.g.\
\textcite{magnusson_group_2013} and
\textcite{ho_efficacy_2012}). \textcite{pallmann_adaptive_2018}
provided an excellent review of adaptive designs in clinical trials
from a practical perspective.

Logistical hurdles and statistical challenges have prevented a wider adoption of adaptive designs
in clinical trials despite their ethical advantages \parencite{robertson_point_2023}. In particular, because Box's linear process of ``conjecture-design-experiment-analysis'' no longer applies to adaptive experiments, most standard statistical methods exhibit selection bias \parencite{rosenberger_randomization_2015}. 
For certain parametric models, the distribution of a test statistic in
an adaptive study can be derived \parencite{spencer_adaptive_2016,
  lin_inference_2021,frieri_design_2022}, or one can conduct hypothesis
tests for the different stages of the trial separately and combine the
resulting p-values \parencite{pocock_interim_1982,
  muller_adaptive_2001, stallard_adaptive_2023}.
However, these approaches are limited to just a few trial designs and
heavily rely on modelling assumptions that are often difficult to
justify. In the last decade, applications in online experiments
of web services sparked renewed interest in adaptively
collected data. A recent line of research uses adaptive weighting
techniques to recover asymptotic normality
\parencite{deshpande_accurate_2018, zhang_inference_2020,
  hadad_confidence_2021, zhang_statistical_2021}; \revc{moreover, \textcite{hirano_asymptotic_2025} develop asymptotic representations for adaptive batched experiments.} In another novel
approach, \textcite{howard_time-uniform_2021} use martingale
concentration inequalities to construct any-time valid confidence
intervals for the average treatment effect. These new methods usually
require weaker assumptions but are often technically involved and only
apply to simple response-adaptive designs.

Selective inference seeks valid inference on potential findings that
are discovered after viewing the data
\parencite{benjaminiSelectiveInferenceSilent2020}. Several notions of
``validity'' (or selective risks) have been considered. The notion
with the longest history is familywise error, also known as
simultaneous inference, which dates back to the well known
methods of \textcite{tukeyComparingIndividualMeans1949} and
\textcite{scheffeAnalysisVariance1959}. Although simultaneous
inference is often associated with multiple hypothesis testing, it can
also be applied to post-model selection problems
\parencite{berkValidPostselectionInference2013}. A more lenient notion
of risk
is the average error rate over the selected; examples include the
false discovery rate
\parencite{benjaminiControllingFalseDiscovery1995} and the false
coverage-statement rate
\parencite{benjaminiFalseDiscoveryRate2005a}. A third notion, which is
the one used in our proposed selective randomization test, is the
conditional post-selection error
\parencite{fithian_optimal_2017}. Interest in this approach has surged
after successful applications to the model selection problem
\parencite{lee_exact_2016}; see
\textcite{kuchibhotla_post-selection_2022} for a recent review.

\begin{figure}[t]
  \centering
  \begin{subfigure}[t]{0.4\textwidth} \centering
    \begin{tikzcd}
      \text{conjecture} \arrow[r] & \text{design} \arrow[d] \\
      & \text{experiment} \arrow[d] \\
      & \text{final analysis}
    \end{tikzcd}
    \caption{Box's linear process.}
    \label{fig:box}
  \end{subfigure}
  \quad
  \begin{subfigure}[t]{0.4\textwidth} \centering
    \begin{tikzcd}
      \text{conjecture} \arrow[r] & \text{design} \arrow[d] \\
      \text{prelim analysis} \arrow[ur] & \text{experiment} \arrow[l] \arrow[d] \\
      & \text{final analysis}
    \end{tikzcd}
    \caption{Response-adaptive randomization.}
    \label{fig:rar}
  \end{subfigure} \\[2em]

  \begin{subfigure}[t]{0.4\textwidth} \centering
    \begin{tikzcd}
      \text{conjecture} \arrow[r] & \text{design} \arrow[d] \\
      \text{prelim analysis} \arrow[u] & \text{experiment} \arrow[l] \arrow[d] \\
      & \text{final analysis}
    \end{tikzcd}
    \caption{Enrichment design.}
    \label{fig:enrichment}
  \end{subfigure}
  \quad
  \begin{subfigure}[t]{0.4\textwidth} \centering
    \begin{tikzcd}
      \text{conjecture} \arrow[ddr] & \text{design} \arrow[d] \\
      \text{prelim analysis} \arrow[u] & \text{experiment} \arrow[l] \\
      & \text{final analysis}
    \end{tikzcd}
    \caption{Post-selection inference.}
    \label{fig:posi}
  \end{subfigure} \\[2em]

  \begin{subfigure}[t]{0.4\textwidth} \centering
    \begin{tikzcd}
      \text{conjecture} \arrow[r] \arrow[ddr] & \text{design} \arrow[d] \\
      \text{prelim analysis} \arrow[u] \arrow[ur] & \text{experiment} \arrow[l] \arrow[d]\\
      & \text{final analysis}
    \end{tikzcd}
    \caption{Our problem.}
    \label{fig:ours}
  \end{subfigure}

  \caption{A schematic illustration of different
    modes of scientific research. Adaptive experimentation
    (\Cref{fig:rar,fig:enrichment}) and post-selection
    inference (\Cref{fig:posi}) break away from traditional
    statistical inference by inserting different ``mini cycles''
    within each cycle of conjecture-design-experiment-analysis
    described by \textcite{box1957}.}
  \label{fig:cycles}
\end{figure}

\Cref{fig:cycles} illustrates how adaptive experiments and selective inference refine Box's linear process of conjecture-design-experiment-analysis (\Cref{fig:box}). In adaptive experiments, preliminary data from the
experiment are used to modify the experimental design (\Cref{fig:rar}) and conjecture (\Cref{fig:enrichment}). In post-selection inference,
preliminary analyses (e.g.\ a model selection procedure) are used to
choose the conjecture (\Cref{fig:posi}). As mentioned previously, the
selective randomization test proposed here applies conditional post-selection inference to
adaptive experiments (\Cref{fig:ours}).

Two points merit a discussion at this point. First, among the
different approaches to selective inference, we believe conditional
inference is the most appropriate for adaptive clinical trials. This
is because controlling the selective type-I error as in
\eqref{eq:selective-error} allows us to \emph{treat the selected
  hypothesis as given} when interpreting the results. This is crucial
in objectively evaluating the evidence in a randomized experiment from
a regulatory perspective. The familywise error rate is unnecessarily
strict as we are often just interested in testing one hypothesis, and
the more lenient notions such as the false discovery rate are more
suitable for discovering promising hypotheses but not for confirmatory
studies.

Second, compared to existing methods for inference after model
selection, a distinguishing feature of the selective randomization
test is that it is model-free. In particular, it does not require any
parametric model on the outcome, not even that the data points are
independent or identically distributed. Originally suggested by
\textcite{fisher_design_1935} and elaborated by
\textcite{pitman_significance_1937}, randomization inference is
exactly based on the randomness introduced by the experimenter; in
Fisher's words, the ``physical act of randomization'' justifies the
statistical inference. Although model-based inference is predominant
in the analysis of randomized experiments, interest in randomization
inference, especially for complex designs, has rekindled in recent years
\parencite{ji_randomization_2017,
  rosenberger_randomization_2019,wang_randomization_2020,
  offer-westort_adaptive_2021, bugni_inference_2018}. \revc{In
  particular, randomization tests can be easily tailored to
  non-standard settings by conditioning on a suitable statistic 
  \parencite{morgan_rerandomization_2012,zheng_multi-center_2008,basse2019randomization};
  see \textcite{zhang_2023_randomization_test} for a recent review on
  conditional randomization tests.} Some previous works
examined the use of randomization inference in adaptive
experiments \parencite{edwards_model_1999,simon2011using}, but the
selective randomization test proposed here can be applied to a much
wider class of adaptive designs.

\subsection{Organization of the Article}
Section~\ref{sec:notation} introduces our notation and causal
graphical framework for a general adaptive study with multiple
stages. Section~\ref{sec:selective-inference} formally defines the
selective randomization p-value, \revc{discusses under which
  conditions it is computable} and explains how it can be used to
estimate and construct confidence intervals for a homogeneous
treatment effect. In Sections~\ref{sec:simulation-study} and
\ref{sec:real.data}, we illustrate the validity of our method using
simulations and real-world data. We conclude with a discussion on future
research questions in
Section~\ref{sec:discussion}. The Supplementary Material contain
technical proofs, Monte Carlo methods for computing the selective
randomization p-value, and details about the simulation study.

\section{Adaptive Multi-stage Experiments}\label{sec:notation}


\revc{This section provides a general description of an adaptive experiment with multiple stages. First, we introduce our notation and a graphical model which provides an intuitive depiction of the experiment. Then, we describe the minimal assumptions that allow us to conduct selective randomization inference. Fulfilling these conditions is thus an important consideration for the design of an adaptive experiment. Lastly, we provide simplified versions of the assumptions (marked by $*$) and explain how they can be understood and ensured in practice. Throughout, we use the two-stage enrichment trial in Example~\ref{ex:enrichment} to illustrate our notation and assumptions.}

\subsection{Experimental Set-up and Notation}

Consider an 
adaptive study with $K$ stages and $L+1$ different treatments (e.g.\
drugs, dosages, policies) that are enumerated from $0$ to $L \geq
1$. Denote $[K] := \{1,\ldots,K\},\, [0]:=\emptyset$, and we will use
$k\in [K]$ to denote an arbitrary stage and $l \in \{0,\ldots,L\}$ an
arbitrary treatment. At every stage, both the recruitment of
participants and the treatment assignment mechanism may depend on data
from previous stages. \revc{In order to visually describe the relationships of the involved variables both within and between stages, we employ directed acyclic graphs (DAGs). For instance, the graph in Figure~\ref{fig:dag} depicts an adaptive experiment with $K=2$ stages, like Example~\ref{ex:enrichment}. Directed edges like $X_{R_1} \rightarrow Z_1$ can be interpreted as ``$X_{R_1}$ influences $Z_1$'' or ``$Z_1$ is a function of $X_{R_1}$''. In essence, Figure~\ref{fig:dag} is a mathematical formalization of the schematic in Figure~\ref{fig:trial-pictogram}. We return to these graphical models at the end of this section, but first, we describe the setup more closely.}

\begin{figure}[t] \centering
  \begin{tikzpicture}
    \node at (0,0.2) (yr1) {$Y_{R_1}$};
    \node at (-3, 2) (r1) {$R_1$};
    \node at (-1.25, 1) (xr1) {$X_{R_1}$};
    \node at (-3,-1) (w1) {$Z_1$};
    \node at (2,0.2) (s1) {$S_1$};
    \node at (0,2) (yr1s) {$Y_{R_1}(\cdot)$};

    \node at (7,0.2) (yr2) {$Y_{R_2}$};
    \node at (4,2) (r2) {$R_2$};
    \node at (5.75,1) (xr2) {$X_{R_2}$};
    \node at (4,-1) (w2) {$Z_2$};
    \node at (9,0.2) (s2) {$S_2$};
    \node at (7,2) (yr2s) {$Y_{R_2}(\cdot)$};

    \node at (2.5,4.5) (ys) {$Y(\cdot)$};
    \node at (1 ,4.5) (x) {$X$};

    \draw[-latex, thick] (x) -- (ys);

    \draw[-latex, thick] (xr1) -- (w1);
    \draw[-latex, thick] (w1) -- (yr1);
    \draw[-latex, thick] (yr1) -- (s1);
    \draw[-latex, thick] (xr1) -- (s1);
    \draw[-latex, thick] (w1) -- (s1);
    \draw[-latex, thick] (yr1s) -- (yr1);
    \draw[-latex, thick] (ys) -- (yr1s);
    \draw[-latex, thick] (r1) -- (w1);
    \draw[-latex, thick] (r1) -- (xr1);
    \draw[-latex, thick] (r1) -- (yr1s);

    \draw[-latex, thick] (xr2) -- (w2);
    \draw[-latex, thick] (w2) -- (yr2);
    \draw[-latex, thick] (yr2) -- (s2);
    \draw[-latex, thick] (xr2) -- (s2);
    \draw[-latex, thick] (w2) -- (s2);
    \draw[-latex, thick] (yr2s) -- (yr2);
    \draw[-latex, thick] (ys) -- (yr2s);
    \draw[-latex, thick] (r2) -- (yr2s);
    \draw[-latex, thick] (r2) -- (w2);
    \draw[-latex, thick] (r2) -- (xr2);

    \draw[-latex, thick] (s1) -- (r2);
    \draw[-latex, thick] (s1) -- (w2);

    \draw[-latex, thick, color=black] (s1) .. controls (2.2,-2.8) and (8.5,-3.2) .. (s2);

    \draw[-latex, thick, color=black] (r1) .. controls (-2,3.25) and (2.5,3.25) .. (r2);

    \draw[-latex, thick, color=black] (x) .. controls (-0.5,4.5) and (-2.5,3.5) .. (r1);

    \draw[-latex, thick, color=black] (x) .. controls (1.3,4.3) and (3,3.5) .. (r2);


    \draw[-latex, thick, color=black] (x) .. controls (-0.75,3.5) and (-1,2.2) .. (xr1);

    \draw[-latex, thick, color=black] (x) .. controls (3,4) and (5,2.5) .. (xr2);

    \begin{pgfonlayer}{background}
      \node [fill=blue!20,fit=(w1) (yr1s) (xr1) (yr1), text=blue, align=right,text height=115pt] {\phantom{blabla}Stage 1};
      \node [fill=blue!20,fit=(w2) (yr2s) (xr2) (yr2), text=blue, align=right, text height=115pt] {\phantom{blabla}Stage 2};
      \node [fill=red!20,fit=(s1)] {};
      \node [fill=red!20,fit=(s2)] {};
      \node [fill=black!20,fit=(x) (ys)] {};
    \end{pgfonlayer}

  \end{tikzpicture}
  \caption{Directed acyclic graph (DAG) of a two-stage adaptive trial, where Assumptions~\eqref{eq:A2-star} and~\eqref{eq:A3-star} hold. In addition, $S_2$ may directly depend on the observed data from the first stage: $R_1, X_{R_1}, Z_1$ and $Y_{R_1}$. For simplicity, these relationships are not depicted.}
  \label{fig:dag}
\end{figure}

We assume participants are recruited from a pool of $n$ units, \revc{e.g.\ all people registered with a health service or all patients in a hospital during a certain time period}. For each candidate $i \in [n]$, we assume some covariate~$X_i$ would become available if the candidate is recruited; let us denote $X := (X_i)_{i\in [n]}$. Following the Neyman-Rubin causal model, the causal effect of a treatment~$l$ on a unit~$i$ is described using the potential outcome~$Y_i(l)$. Let $Y_i(\cdot) := (Y_i(l))_{l \in \{0,\dots,L\}}$ denote all the potential outcomes of participant~$i$, and let $Y(\cdot) := (Y_i(\cdot))_{i \in [n]}$. Here, we implicitly make the familiar assumption of ``no interference'', meaning the potential outcomes of one unit do not depend on the treatments received by the other units. Finally, for a subset $I\subseteq [n]$ of units, we use the notations $X_I := (X_i)_{i \in I}$ and $Y_I(\cdot) := (Y_i(\cdot))_{i \in I}$ . We refer the reader to \textcite{imbens2015causal} for a comprehensive introduction to the Neyman-Rubin potential outcomes model. \revc{In Example~\ref{ex:enrichment}, there are two stages and we study two treatments (the new therapy and a placebo); hence, $K=2$ and $L=1$. For every unit $i$ participating in the study, the corresponding covariates $X_i$ include the genetic risk score and the potential outcomes $Y_i(0)$ and $Y_i(1)$ describe the patient's outcome, such as blood pressure or cholesterol level, under the placebo and the new therapy, respectively.}

The first distinguishing feature of our setup is that recruitment is
modelled as a random event. In the clinical trials and causal
inference literature, it is usually assumed implicitly that
the recruited units are fixed. 
However, in adaptive experiments such as enrichment trials, recruitment decisions can depend on previously observed data \revd{as well as the covariates $X$ of the units in the recruitment pool}. To incorporate this, we model the indices of units participating in each stage as the random sets $R_1, \ldots, R_K \subseteq [n]$.
To simplify the problem, we do not consider longitudinal studies with patient follow-up over multiple time points in this work and assume that each unit can be included into the trial at most once, so $R_1,\ldots, R_K$ are disjoint.
This distinguishes our set-up from sequential multiple assignment randomized trials (SMART), where the same participants pass through multiple stages and receive interventions/treatments that can be adapted to their own past outcomes \parencite{almirall_introduction_2014}.
The covariates and potential outcomes of participants at stage $k$ are denoted by $X_{R_k}$ and $Y_{R_k}(\cdot)$, respectively. Denote ${R= \cup_k R_k}$, $R_{[k]} = \cup_{k'=1}^k R_{k'}$ and $R^C = [n]\setminus R$. 

\subsection{Assumptions}

At stage $k$, a total of $|R_k|$ units are recruited; let their administered treatments be denoted as $Z_k \in \calZ_k \subseteq \{0,\ldots,L\}^{\lvert R_k\rvert}$. 
The treatment assignment determines which of the potential outcomes is actually realized: the observed values of the outcome variable are denoted by $Y_{R_k}$. We assume consistency between the potential outcomes and factual outcomes, 
that is, $Y_{i} = Y_{i}(Z_{k,i})$ for all $i \in R_k$. 
In the following, we use the short-hand notations $Z = (Z_k)_{k \in [K]}$, $\calZ = \bigtimes_{k=1}^K\, \calZ_k$ and $Y_R = (Y_{R_k})_{k \in [K]}$. 
In adaptive experiments, $Z_k$ is randomized by the experimenter and may depend on the covariates of the recruited participants $X_{R_k}$ and \emph{observed} data from previous stages. 
Accordingly, we assume that the \revd{sequential} treatment assignment distribution 
\begin{equation}\label{eq:A1}
    q(z\mid r, x_r, y_r) := \prod_{k=1}^K \Prob(Z_k = z_k \mid R_{[k]} = r_{[k]}, X_{R_{[k]}} = x_{r_{[k]}}, Y_{R_{[k-1]}} = y_{r_{[k-1]}}, Z_{[k-1]} = z_{[k-1]})\quad\text{is known}. \tag{A1}
\end{equation}
\revc{Going forward, we refer to the $k$-th factors on the right hand
  side as $q_k$.}

\revc{Moreover, we need a multi-stage generalization of the typical assumption that the treatment assignment is conditionally independent of the potential outcomes; in the usual setting with one stage, this condition is expressed as $Z_1 \indep Y_{R_1}(\cdot)\mid R_1, X_{R_1}$. To this end, } we require that for any stage $k$ the treatment assignment $Z_k$ depends on the potential outcomes of the current and previous stages only through already observed data. We formalize this \revc{assumption} as the following conditional independence relationship:
\begin{equation}\label{eq:A2}
    Z_k \indep Y_{R_{[k]}}(\cdot) \mid R_{[k]}, X_{R_{[k]}}, Y_{R_{[k-1]}}, Z_{[k-1]},\qquad \forall\, k \in [K]. \tag{A2}
\end{equation}

\revc{Note that the treatment $Z_k$ is allowed to depend on \emph{observed} outcomes from previous stages as we condition on $Y_{R_{[k-1]}}$.}
Assumption~\eqref{eq:A2} is akin to the different notions of
sequential exchangeability as described in \textcite[chap.\
19.5]{hernan_causal_2024}. Unlike their setting, we do not consider
time-varying treatments and we explicitly model recruitment as a
random variable.

\revc{In an adaptive trial, Assumptions~\eqref{eq:A1}
  and~\eqref{eq:A2} can be guaranteed because the experimenter
  randomizes the treatment assignments. While~\eqref{eq:A1} is then
  obviously satisfied, Assumption~\eqref{eq:A2} is true whenever $Z_k$
  is a function of already observed data and an exogenous source of
  randomness $E_k$, for instance a computer-generated random number;
  that is, we can write $Z_k = f_k(R_{[k]}, X_{R_{[k]}},
  Y_{R_{[k-1]}}, Z_{[k-1]}, E_k)$. 
  } 
  Consider, for instance, a simple Bernoulli trial with two arms (i.e.\ $L=1$): at
each stage $k \in [K]$, every recruited participant is independently
assigned to treatment or control with a pre-specified probability $p
\in (0,1)$. Hence, the \revd{sequential} \revc{treatment assignment distribution} is
given by
\revc{
\begin{equation*}
    q(z\mid r,x_r,y_r) = q(z \mid r) = \prod_{k=1}^K\, \prod_{i \in r_k} p^{z_{k,i}}\,(1-p)^{1-z_{k,i}}.
\end{equation*}
}
Another simple example is the completely randomized design (or
sampling without replacement) that enforces a fixed number of
treatment and control units. These designs can be generalized by
allowing the probability of treatment to depend on
observed covariates; examples include randomized complete block design,
biased coin design \parencite{efron_forcing_1971}, and rerandomization
\parencite{morgan_rerandomization_2012}. See \textcite{ye_toward_2023}
and the references therein for more examples of covariate-adaptive
designs.

The second characterizing feature of our setting is the selection statistics $S_1,\dots,S_K$, which describe the adaptive nature of the experiment. After administering the treatment and recording the response in stage $k$, the experimenter is allowed to give a preliminary analysis of the data from stages~$1$ through $k$ and summarize their findings in a statistic~$S_k$, which then determines the design of the next stage of the experiment. In other words, the recruitment and treatment assignment in stage $k+1$ are allowed to depend on $S_k$; in \Cref{fig:dag}, this corresponds to the directed edges $S_1 \rightarrow R_2$ and $S_1 \rightarrow Z_2$. 
\revd{Hence, the recruitment in the $k$-th stage is a function of $R_{[k-1]}$, $X$, $S_{k-1}$ and some exogenous source of randomness $\tilde{E}_k$, i.e.\ we can write $R_k = \tilde{f}_k(R_{[k-1]}, X, S_{k-1}, \tilde{E}_k)$.}

Note that $S_k$ is a function of the previously observed data $R_{[k]}, X_{R_{[k]}}, Y_{R_{[k]}}$ and $Z_{[k]}$. After completing the trial, the researchers choose a null hypothesis that may depend on the observed data of all stages of the trial. Hence, we model it as the final selection statistic $S_K$. We set $S:=(S_1,\ldots,S_K) \in \calS = \calS_{-K}\times \calS_K$. In general, $S$ is a deterministic function of \revc{$R$, $X_R$, $Y_R(Z)$} and $Z$ but we often shorten the notation as $S := S(Z) := S(Z,R,X_R,Y_R)$; moreover, we use the convention $S_0 := \emptyset$.
\revc{In our running example~\ref{ex:enrichment}, we may construct the selection statistic $S_1$ as follows. We estimate the difference $\Delta$ of the treatment effects between the two groups on data from the first stage; then, we discretize it to three levels that correspond to recruiting participants with low genetic risk, high genetic risk or both groups in the second stage. Consequently, $S_1$ is a categorical statistic that can assume three values. The mathematical expressions can be found in Section~\ref{sec:simulation-study}.}

\revc{To ensure type-I error control,} it is important
that the selection statistic $S_{k-1}$ captures \emph{all the
  information} in previously observed data that impacts the design of
the stage $k$. This can be formalized via the following conditional independence statement:
\begin{equation}\label{eq:A3}
    R_k,X_{R_k},Y_{R_k}(\cdot) \indep Z_{[k-1]} \mid R_{[k-1]},X_{R_{[k-1]}}, Y_{R_{[k-1]}}(\cdot), S_{k-1},\qquad \forall\,k \in [K].\tag{A3}
\end{equation}
\revc{Assumption~\eqref{eq:A3} stipulates that the recruitment, covariates and potential outcomes in stage $k$ are conditionally independent of the treatments assigned in earlier stages $Z_{[k-1]}$ given earlier recruitment decisions, covariates, potential outcomes and, importantly, the statistic $S_{k-1}$. This condition can be ensured by only revealing $S_{k-1}$ and $(R_{[k-1]},X_{R_{[k-1]}}, Y_{R_{[k-1]}})$ to the experimenter recruiting participants for the $k$-th stage.}


\revc{Assumptions~\eqref{eq:A1},~\eqref{eq:A2} and~\eqref{eq:A3} are the minimal conditions that we require for the selective randomization test. They allow for highly adaptive recruitment decisions and treatment assignment distributions. In practice, many designs fulfill stronger conditional independence relationships that are easier to verify. For instance, \eqref{eq:A2} can be strengthened as
\begin{equation}
   Z_k \indep Z_{[k-1]}, R_{[k-1]}, X_{R_{[k-1]}}, Y_{R_{[k]}}(\cdot) \mid R_k, X_{R_k},S_{k-1},\qquad \forall\,k \in [K]. \tag{A2*} \label{eq:A2-star}
\end{equation}
This assumption states that the treatment $Z_k$ depends on (observed and counterfactual) data in previous stages only through the recruitment and covariates in stage $k$ and the selection statistic $S_{k-1}$. One notable feature of designs satisfying \eqref{eq:A2-star} is that the selective randomization p-value can be computed in parallel; see \Cref{prop:computability} in the next section.

Similarly, we can strengthen Assumption~\eqref{eq:A3} by requiring that the recruitment in stage $k$ depends on data from previous stages only through $S_{k-1}$:
\begin{subequations}
    \makeatletter
        \def\@currentlabel{A3*}
        \makeatother
        \label{eq:A3-star}
        \renewcommand{\theequation}{A3*-\arabic{equation}}
    \begin{alignat}{2}
        R_k &\indep Z_{[k-1]}, R_{[k-1]}, X_{R_{[k-1]}}, Y_{R_{[k-1]}}(\cdot) \mid S_{k-1},\qquad & &\forall\,k \in [K], \label{eq:A3-star-1}\\
        X_{R_k},Y_{R_k}(\cdot) &\indep Z_{[k-1]} \mid R_{[k]}, X_{R_{[k-1]}}, Y_{R_{[k-1]}}(\cdot), S_{k-1},\qquad & &\forall\,k \in [K]. \label{eq:A3-star-2}
    \end{alignat}
\end{subequations}
\addtocounter{equation}{-1}
Note that \eqref{eq:A3-star-1} and~\eqref{eq:A3-star-2} imply \eqref{eq:A3} by using the weak union and contraction axioms of conditional independence \parencite{dawid1979conditional,pearlProbabilisticReasoningIntelligent1988}. From here on, we refer to Assumptions~\eqref{eq:A3-star-1} and~\eqref{eq:A3-star-2} collectively as~\eqref{eq:A3-star}.}

\revc{

A more intuitive way to understand and verify Assumptions~\eqref{eq:A2} and~\eqref{eq:A3} is through a causal diagram. During the planning phase of a study, the experimenters specify all involved variables (most importantly the selection statistics) and draw a directed acyclic graph (DAG) that depicts \emph{all} the dependencies between them according to the proposed design; see Figure~\ref{fig:dag} for an example. Recall that directed edges like $X_{R_1} \rightarrow Z_1$ are interpreted as ``$X_{R_1}$ influences $Z_1$'' or ``$Z_1$ is a function of $X_{R_1}$''. Once the DAG is specified, we can use the d-separation criterion~\parencite{pearlProbabilisticReasoningIntelligent1988} to read off conditional independencies from the graph. For instance, the two-stage trial in Figure~\ref{fig:dag} fulfills Assumption~\eqref{eq:A3-star-1} as all paths from $(Z_1, R_1, X_{R_1}, Y_{R_1}(\cdot))$ to $R_2$ are blocked by $S_1$. Furthermore, \eqref{eq:A3-star-2} holds as all paths from $Z_1$ to $(X_{R_2}, Y_{R_2}(\cdot))$ are blocked by $(R_1, R_2, X_{R_1}, Y_{R_1}(\cdot), S_1)$.

}

\section{Selective Randomization Inference}\label{sec:selective-inference}

This section applies ideas from the literature on conditional post-selection inference to randomization tests in adaptive experiments. We will formally define the selective randomization p-value, \revc{discuss under which conditions it is computable and address estimation and confidence intervals for a homogeneous treatment effect.}

\subsection{Selective Randomization Distribution}\label{sec:sel-rand-dist}
First suggested in \textcite{fisher_design_1935}, randomization
inference leverages the fact that the conditional distribution of the
treatment assignments~$Z$ is specified by the experimenter. In order
to avoid imposing any modelling assumptions on the potential outcomes
$Y(\cdot)$ and covariates~$X$ that may be unrealistic, randomization
inference conditions on these quantities. In most settings, the
recruited units are regarded as fixed and given. 
In adaptive studies, however, the recruitment
of participants~$R$ may depend on the treatment assignments~$Z$. 

Note that there are two sources of randomness in the experiment: the recruitment $R_k$, which is modelled as a random event, and the randomized treatment assignment given $R_k$, which is the basis of randomization inference. To restrict our analysis to the recruited units, we additionally condition on $R$; as a consequence, randomization inference achieves high internal validity but requires additional assumptions for external validity. 
This yields the randomization distribution
\revc{
    ${\Prob(Z=\cdot \mid R, X_R, Y_R(\cdot)).}$
}

\revd{Randomization tests based on this distribution}
generally do not control the false positive (aka type-I error) rate in adaptive studies, because some information in earlier stages of the study is already used to choose the null hypothesis and design in the later stages. To avoid ``double dipping'', we can remove any such information by additionally conditioning on the selection statistics $S = (S_1,\dots,S_K)$ which may depend on $Z$ \parencite{fithian_optimal_2017}.
The resulting \emph{selective randomization distribution} is then given by
\revc{
    $\Prob(Z=\cdot \mid R, X_R, Y_R(\cdot), S(Z)).$
}

Next, we will use this distribution to define the selective
randomization p-value and show that it controls the type-I error rate
for the \emph{selected} hypothesis.

\subsection{Hypothesis Testing}\label{sec:hypothesis-testing}

Causal hypotheses about the efficacy of the treatments can be formulated as contrasts of (a subset of) the corresponding potential outcomes.
Since one of the potential outcomes of each recruited unit is observed
under the consistency assumption, that is, $Y_R(Z) = Y_R$, the null
hypothesis allows us to impute (a subset of) the remaining~$Y_R(\cdot)$ from~$Y_R$. A common hypothesis in randomization inference is Fisher's sharp null which states that there is no treatment effect whatsoever:
\begin{equation*}
    \mathrm{H}_0:\quad Y_i(l)=Y_i(l'),\quad \text{for all } l,l'\in \{0,\ldots,L\},\, i \in R.
\end{equation*}
Under this hypothesis, all potential outcomes can be imputed from the
observed outcomes, so \revc{$Y_R(\cdot)$ and thus ${\Prob(Z=\cdot \mid R, X_R, Y_R(\cdot), S(Z))}$ is known. To test Fisher's sharp null, we may now choose a test statistic $T := T(Z, R, X_R,Y_R(\cdot))$} and compare its observed value to its selective randomization distribution under the null.

Yet, in many adaptive studies the null hypothesis of interest \revc{does not apply to all potential outcomes. Therefore, we can only impute a subset of} $Y_R(\cdot)$ from the observed $Y_R$. \textcite{zhang_2023_randomization_test} call such a null hypothesis \emph{partially sharp}. \revc{Let us consider an example:
Suppose there are $L+1 = 3$ treatments but we are only interested in comparing the first two by testing the null hypothesis
\begin{equation*}
    \mathrm{H}_0:\quad Y_i(0)=Y_i(1),\quad \text{for all } i \in R.
\end{equation*}
In this case, data from units who do not receive the first two treatments are not useful and we cannot impute any potential outcomes $Y_i(2)$ from this null hypothesis. Therefore, we need to condition on the set of units that receive the first two treatments, i.e.\ we condition on $G(Z) = (\one\{Z_{k,i} \in \{0,1\}\})_{k,i}$. Abstractly speaking, when a partially sharp null hypothesis is specified, we usually need to restrict the support of the (selective) randomization distribution by conditioning on an additional statistic $G := G(Z):=G(Z,R,X_R)$. This also applies to the naive and the data splitting approach and is necessary to render the respective p-values computable. For more details, see Section~\ref{sec:computability} and Supplementary Material~\ref{supp:computability}.
}

In summary, conditioning on $S$ accounts for the selection of design and null hypothesis, and conditioning on~$G$ accommodates partially sharp null hypotheses. \revc{To test such a null hypothesis, we choose a test statistic~$T$ and compare its observed value to its selective randomization distribution
\begin{equation*}
    \Prob\big(\,T(Z,R,X_R,Y_R(\cdot))=\cdot\,\big\vert\, R,X_R,Y_R(\cdot), S(Z), G(Z)\big).
\end{equation*}
We define the selective randomization p-value in this section and elaborate on conditions concerning $T$, $S$, $G$ and~$q$ that render it computable in the next section.
} 


\begin{definition}\label{def:rand-p-value}
    \revd{In the setting above, let $Z^*$ be independently drawn from the sequential treatment assignment distribution, that is for all $k \in [K]$:
    \begin{align}
        Z^*_k &\overset{D}{=} Z_k \mid R_{[k]}, X_{R_{[k]}},
        Y^*_{R_{[k-1]}} = Y_{R_{[k-1]}}, Z^*_{[k-1]} = Z_{[k-1]},
        \label{eq:zstar-id}\\
        \begin{split}
            Z_k^* &\indep Z_{[k]}, Y_{R_{[k]}}(\cdot) \mid R_{[k]}, X_{R_{[k]}}, Y^*_{R_{[k-1]}}, Z^*_{[k-1]}, \\
            Z_k   &\indep Z^*_{[k]}, Y_{R_{[k]}}(\cdot) \mid R_{[k]}, X_{R_{[k]}}, Y_{R_{[k-1]}}, Z_{[k-1]},
        \end{split}
        \label{eq:zstar-indep}\\
        R_k, X_{R_k}, Y_{R_k}(\cdot) &\indep Z^*_{[k-1]} \mid R_{[k-1]}, X_{R_{[k-1]}}, Y_{R_{[k-1]}}(\cdot), Z_{[k-1]}, S^*_{k-1}=S_{k-1},
        \label{eq:zstar-sel}
    \end{align}
    where $Y^*_{R_{k}} = Y_{R_k}(Z_k^*)$ and $S^*_{k-1}=S_{k-1}(Z^*_{[k-1]})$.}
    Then, the \emph{selective randomization p-value} is given by
    \begin{equation}
        P_{\text{sel}}:=\Prob\big(T(Z^*\!,\,R,X_R,Y_R(\cdot)) \leq T(Z,\,R,X_R,Y_R(\cdot)) \,\big\vert\, R,X_R,Y_R(\cdot),\,\, Z,\,\, S(Z^*) = S(Z), G(Z^*) = G(Z)\big),\label{eq:rand-p-value}
    \end{equation}
    which is a function of $R,X_R,Y_R(\cdot)$ and $Z$ in general.
\end{definition}

\revd{
\begin{remark} The conditions on $Z^*$ in the definition above are seemingly complex but naturally generalize the non-adaptive case where $K=1$. There, $Z_1$ and $Z^*_1$ are assumed to have the same distribution and satisfy the conditional independence $Z^* \indep Z \indep Y_{R_1}(\cdot) \mid R_1, X_{R_1}$. This is extended by~\eqref{eq:zstar-id} and~\eqref{eq:zstar-indep}, respectively. Lastly, we also require that conditioning on $S$ captures all the selective information under the treatment $Z^*$, as well. This is formalized in~\eqref{eq:zstar-sel} and akin to Assumption~\eqref{eq:A3}.
\end{remark}
}

The selective randomization p-value is a specific instance
of the conditional randomization p-value defined in
\textcite{zhang_2023_randomization_test}. Therefore, the selective
type-I error control follows immediately from Theorem~1 in
\textcite{zhang_2023_randomization_test}.

\begin{proposition}\label{thm:type-i-error}
    Let $\alpha \in [0,1]$. The selective randomization p-value as given in Definition \ref{def:rand-p-value} stochastically dominates the uniform distribution on $[0,1]$:
    \begin{equation*}
        \Prob(\,P_{\textnormal{sel}} \leq \alpha \,\vert\, R,X_R,Y_R(\cdot), S(Z)=s, G(Z)=g) \leq \alpha,\quad \forall\, s\in \calS, g\in \calG.
    \end{equation*}
    Consequently, a test which rejects the selected null hypothesis when $P_{\textnormal{sel}} \leq \alpha$ controls the selective type-I error, i.e.
    \begin{equation*}
        \Prob(\,P_{\textnormal{sel}} \leq \alpha \mid R,X_R,Y_R(\cdot), S(Z)=s) \leq \alpha,\quad \forall\, s\in \calS.
    \end{equation*}
\end{proposition}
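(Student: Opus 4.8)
The plan is to reduce the claim to the elementary fact that feeding a random variable into its own conditional c.d.f.\ produces something that stochastically dominates $\mathrm{Unif}[0,1]$, and then to strip off the conditioning on $G$ by averaging. First I would condition on $W = w$ once and for all; by the defining property of $Z^*$ in Definition~\ref{def:rand-p-value}, the pair $(Z, Z^*)$ then consists of two conditionally i.i.d.\ draws from the randomization distribution $\Prob(Z = \cdot \mid W = w)$. I would then fix values $s \in \calS$, $g \in \calG$ with $\Prob(S(Z) = s, G(Z) = g \mid W = w) > 0$ --- the remaining values form a null set, so there is nothing to prove there --- and note that, because $S$ and $G$ are deterministic functions of $(Z, W)$, the events $A := \{S(Z) = s, G(Z) = g\}$ and $A^* := \{S(Z^*) = s, G(Z^*) = g\}$ are measurable with respect to $Z$ and $Z^*$ respectively.

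The crux of the argument is a ``re-i.i.d.-ification'' step: since $Z \indep Z^* \mid W$, conditioning additionally on $A \cap A^*$ leaves $Z$ and $Z^*$ conditionally independent given $W = w$, and each is now distributed according to the \emph{selective} randomization distribution $\mu_{w,s,g} := \Prob(Z = \cdot \mid W = w, A) = \Prob(Z^* = \cdot \mid W = w, A^*)$. Consequently, on the event $A$,
\begin{equation*}
  P_{\textnormal{sel}} = \mu_{w,s,g}\big(\{z : T(z,w) \le T(Z,w)\}\big) = F_{w,s,g}\big(T(Z,w)\big),
\end{equation*}
where $F_{w,s,g}(t) := \mu_{w,s,g}(\{z : T(z,w) \le t\})$ is the c.d.f.\ of $T(Z^*, w)$ under $\mu_{w,s,g}$, and --- the key point --- $T(Z, w)$ itself has c.d.f.\ $F_{w,s,g}$ under the conditional law given $W = w$ and $A$.

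I would finish by invoking the standard lemma that $\Prob(F(X) \le \alpha) \le \alpha$ whenever $X \sim F$: ordering the atoms of the law of $T(Z,w)$ as $t_1 < t_2 < \cdots$ and setting $J := \max\{j : F_{w,s,g}(t_j) \le \alpha\}$, one gets $\Prob(P_{\textnormal{sel}} \le \alpha \mid W = w, A) = F_{w,s,g}(t_J) \le \alpha$ (and $0$ if no such $J$ exists), which is the first displayed inequality of the proposition. The second inequality then follows by the law of total probability over the value of $G(Z)$:
\begin{equation*}
  \Prob(P_{\textnormal{sel}} \le \alpha \mid W, S(Z) = s) = \sum_{g \in \calG} \Prob(P_{\textnormal{sel}} \le \alpha \mid W, S(Z) = s, G(Z) = g)\, \Prob(G(Z) = g \mid W, S(Z) = s) \le \alpha,
\end{equation*}
applying the first part to every term (with an integral in place of the sum if $\calG$ is uncountable).

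The step I expect to be the main obstacle --- or at least the one requiring the most care --- is the re-i.i.d.-ification: one must check rigorously that conditioning the conditionally independent pair $(Z, Z^*)$ on the self-measurable events $A$ and $A^*$ preserves conditional independence with the asserted marginals, and that $A \cap A^*$ has positive probability so that $P_{\textnormal{sel}}$ is well defined. A cleaner route that sidesteps re-deriving this is to observe that $P_{\textnormal{sel}}$ is exactly the conditional randomization p-value of \textcite{zhang_2023_randomization_test} with conditioning statistic $(S(Z), G(Z))$, so that both displayed inequalities are immediate consequences of their validity theorem.
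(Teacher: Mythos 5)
Your argument is correct, but it takes a more self-contained route than the paper, whose entire proof is a citation: the authors observe that $P_{\textnormal{sel}}$ is an instance of the conditional randomization p-value of \textcite{zhang_2023_randomization_test} with the treatment space partitioned by the preimages of $(S,G)$, and then invoke their Theorem~1. What you do instead is essentially re-derive the content of that theorem for this special case: the ``re-i.i.d.-ification'' step (conditioning the conditionally independent pair $(Z,Z^*)$ on the $Z$- and $Z^*$-measurable events $A$ and $A^*$ preserves conditional independence and makes both marginals equal to the selective randomization distribution $\mu_{w,s,g}$), the identification $P_{\textnormal{sel}} = F_{w,s,g}(T(Z,w))$ on $A$, and the discrete probability-integral-transform lemma $\Prob(F(X)\le\alpha)\le\alpha$. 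All of these steps are sound --- the positivity of $\Prob(A^*\mid W=w)$ follows from $\Prob(A\mid W=w)>0$ since $Z$ and $Z^*$ share the same conditional law, and the finiteness of $\calZ$ makes the atom-ordering argument legitimate --- and your final reduction of the second display to the first by averaging over $G(Z)$ is exactly the right way to pass from the $(s,g)$-conditional statement to the $s$-conditional one. What your approach buys is transparency: the mechanism (two exchangeable draws from the same selective distribution, then the c.d.f.\ trick) is visible rather than hidden inside a cited theorem; what the paper's approach buys is brevity and an explicit link to the general framework it builds on. You anticipate this yourself in your closing remark, which is precisely the proof the paper gives.
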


\begin{remark}
    \textcite{simon2011using} consider adaptive experiments 
    in which the recruitment scheme and null hypothesis are fixed, but
    the treatment assignment distribution may depend on data from
    previous stages. They implicitly assume ignorability, cf.\
    Assumption~\eqref{eq:A2}, and prove that, under a condition akin
    to Assumption~\eqref{eq:A3}, the usual/naive randomization p-value
    is valid. Their result is a special case of
    Proposition~\ref{thm:type-i-error}: the indices of the recruited
    people are deterministic $R=[n]$; moreover, their assumptions
    imply that $S$ can only be function of $R,X_R$ and $Y_R(\cdot)$ but not $Z$, so $S$ can
    be dropped from the conditioning event as $(R,X_R,Y_R(\cdot))$ is already
    conditioned on.
\end{remark}



\revc{
\subsection{Computability}\label{sec:computability}


As shown in Proposition~\ref{thm:type-i-error}, the p-value $P_{\text{sel}}$ controls the selective type-I error for every choice of test statistic~$T$. Yet, it depends on unobserved potential outcomes; therefore, it is not immediately clear if it can actually be calculated from the data.

If the null hypothesis of interest is sharp, we can indeed infer/impute \emph{all} unobserved potential outcomes from the observed outcomes $Y_R$; hence, $Y_R(\cdot)$ is known and we can compute the selective randomization p-value. If the null hypothesis is partially sharp, however, only a subset of unobserved potential outcomes can be imputed from~$Y_R$. For instance, if we only test for ``no effect whatsoever'' in the group with high genetic risk score, the null hypothesis does not allow us to impute unobserved potential outcomes in the group with low genetic risk score. To ensure that the test statistic~$T$ can still be evaluated in such a situation, we restrict the space of alternative treatments by additionally conditioning on $G(Z)=G(Z,R,X_R)$ such that for any $z^*\in \calZ$ with $G(z^*) = G(Z)$ the value $T(z^*)$ only depends on the imputable potential outcomes. In the example above, we could choose $G(Z)$ as the treatments assigned to the units with low genetic risk score. In adaptive experiments, not only the test statistic but also the selection statistics and the treatment assignment probabilities can be outcome-dependent. Therefore, we need to additionally ensure that conditioning on $G(Z)$ renders not only $T$ but also $S$ and $q$ imputable. For a more detailed and rigorous discussion of imputability, we refer the reader to Supplementary Material~\ref{supp:computability}.

We can now state our main result on the computability of the selective randomization p-value.
}

\begin{proposition}\label{prop:computability}
Under Assumptions~\eqref{eq:A2} and~\eqref{eq:A3}, \revd{the selective randomization distribution is given by}
\begin{equation}\label{eq:factorization-new}
    \Prob(Z = z\mid R = r, X_R = x_r, Y_{R}(\cdot)=y_{r}(\cdot), S(Z)=s, G(Z)=g) \,=\,  \frac{\one{\{G(z)=g, S(z)=s\}}\cdot q(z \mid r,x_r,y_r)}{\sum_{z' \in \calZ}\, \one{\{G(z')=g, S(z')=s\}}\cdot q(z' \mid r,x_r,y_r)}.
\end{equation}
If Assumption~\eqref{eq:A1} holds and $T$, $q$ and $S$ are imputable under the null hypothesis with respect to $G$, then the selective randomization p-value is computable and is given by
\begin{equation}\label{eq:p-sel-formula}
    P_\textnormal{sel} =
    \frac{\sum\limits_{z^* \in \calZ}\, \one\big\{T(z^*,R, X_R, Y_R(\cdot)) \leq T(Z,R, X_R, Y_R(\cdot))\big\}\cdot\one\big\{ G(z^*)=G(Z), S(z^*)=S(Z)\big\}\cdot q(z^*\mid R, X_R, Y_R^*)}{\sum\limits_{z^* \in \calZ}\,\one{\big\{G(z^*)=G(Z), S(z^*)=S(Z)\big\}}\cdot q(z^*\mid R, X_R, Y_R^*)}.
\end{equation}

\end{proposition}
\begin{proof}[Proof Sketch]
    The left-hand side of~\eqref{eq:factorization-new} is proportional to {$\Prob(Z=z, R=r, X_R = x_r, Y_R(\cdot) = y_r(\cdot),S(Z)=s, G(Z)=g)$} for any given $(r, x_r, y_r(\cdot),s,g)$. The key idea of the proof is constructing a factorization of this probability according to the topological ordering of the DAG (or equivalently the progress of time in the experiment):
    \begin{equation*}
        \ldots\quad\rightarrow\quad R_k \quad\rightarrow\quad X_{R_{k}},Y_{R_k}(\cdot) \quad\rightarrow\quad Z_k \quad\rightarrow\quad S_k \quad \rightarrow\quad\ldots
    \end{equation*}
    At each stage $k$, we obtain four terms: the conditional
    probability of $R_k$ given all data from previous stages, the
    conditional probability of $X_{R_{k}},Y_{R_k}(\cdot)$ given
    previous data and $R_k$, and so on. Multiplying the terms from all stages yields a factorization of $\Prob(Z=z, R=r, X_R = x_r, Y_R(\cdot) = y_r(\cdot),S(Z)=s, G(Z)=g)$. Assumptions~\eqref{eq:A2} and~\eqref{eq:A3} help us to simplify the resulting expression such that we can subsume the factors independent of $Z$ into a normalizing constant. This yields~\eqref{eq:factorization-new}. \revd{To derive the formula of the selective randomization p-value, we use a similar factorization which additionally involves the $Z_k^*$ and can be simplified via~\eqref{eq:zstar-id}-\eqref{eq:zstar-sel}.}
    Details of these proofs can be found in the Supplementary Material~\ref{app:proof-computability}.
\end{proof}

\revc{The formula in Proposition~\ref{prop:computability} provides a
  straightforward way of computing (or approximating) the selective
  randomization p-value: if $\calZ$ is discrete and small enough, we
  can directly compute $P_\textnormal{sel}$; otherwise, }
we can approximate it using Monte Carlo methods such as rejection sampling or a Markov Chain Monte Carlo algorithm like the random walk Metropolis-Hastings (RWM) sampler. The details are deferred to Supplementary Material~\ref{sec:computation}. If Assumption~\eqref{eq:A2-star} holds, the expression of $q$ can be simplified as:
\begin{equation*}
    q(z\mid r,x_r,y_r)
    = \prod_{k=1}^K \Prob(Z_k = z_k \mid R_{k} = r_k, X_{R_{k}} = x_{r_k}, S_{k-1} =s_{k-1}).
\end{equation*}
This allows one to parallelize the sampling for the different stages
as the distribution of each $Z_k^*$ depends on data from previous
stages only through $S_{k-1}$ which is conditioned on.

\subsection{Confidence Intervals and Estimation}\label{sec:confint}

Thus far, we have focused on hypothesis tests, but the proposed selective randomization p-value can also be used to construct confidence intervals and estimators for a homogeneous treatment effect. \revc{Defining such an effect is possible if the outcomes are continuous and real-valued and the difference of two potential outcomes is indeed \emph{constant} across (a subgroup of) units. Then, a homogeneous treatment effect~$\tau \in \RR$} can be defined as the difference between the potential outcomes under two treatments $l$ and $l'$ in a subgroup of the recruited units~$I$. This subset of units may be defined in terms of the covariates and can be adaptive, i.e.\ chosen by $S_K$. 
The null hypothesis associated with a homogeneous treatment effect~$\tau$ is given by
\begin{equation*}
    \mathrm{H}_0^\tau\colon\quad Y_i(l)-Y_i(l') = \tau\quad \forall\, i \in I\subseteq R,
\end{equation*}
and let the corresponding selective randomization p-value be denoted as $P_\text{sel}(\tau)$. (Note that we can equally define the homogeneous treatment effect $\tau$ across more than two treatments.) \revc{Setting $\tau =0$ recovers the previous null hypothesis of ``no effect whatsoever'' which is also well-defined for categorical outcomes.} To emphasize the dependence of the selection statistic on $\tau$, we use the notation $S(Z;\tau)$ in the following.

Using the duality of hypothesis testing and confidence sets, we can simply invert the selective hypothesis tests for different values of $\tau$ to construct a $(1-\alpha)$-confidence set as
\begin{equation}\label{eq:confidence-set}
        C_{1-\alpha} := \{\tau \in \RR\colon P_\text{sel}(\tau) > \alpha\}.
\end{equation}
Applying a similar reasoning, we can estimate the homogeneous effect as the $\tau$-value that achieves a selective randomization p-value of $1/2$, i.e.\ $\hat{\tau} = \tau$ such that $P_\text{sel}(\tau) = 1/2$. More specifically, since it may be impossible to find a value of $\tau$ that fulfills this condition exactly, the Hodges-Lehmann (HL) estimator is defined as
\begin{equation*}\label{eq:hl-estimator}
    \hat{\tau} = \frac{\sup\{\tau\colon P_\text{sel}(\tau)< 1/2\}+\inf\{\tau\colon P_\text{sel}(\tau)> 1/2\}}{2}.
\end{equation*}
\revc{For an investigation of its properties, we refer to \textcite{jr_estimates_1963, rosenbaum_hodges-lehmann_1993}.}

When the p-value $P_\text{sel}(\tau)$ is monotonically increasing in $\tau$, the curve $\tau \mapsto P_\text{sel}(\tau)$ crosses every level exactly once. Consequently, the proposed confidence set and estimator are ``well-behaved'': $C_{1-\alpha}$ as defined in \eqref{eq:confidence-set} is an \emph{interval} and the HL estimator provides a sensible point estimate. For unconditional randomization inference, \textcite{caughey_randomisation_2023} state conditions under which the corresponding p-value is indeed monotone.

However, when we condition on the selection event, the p-value curve
is generally not monotone, not even under the conditions in \textcite{caughey_randomisation_2023}. 
\revc{To explain this phenomenon, consider the support of the selective randomization distribution
\begin{equation*}
    \calZ_{S,G}(\tau) := \{z^* \in \calZ \colon S(z^*;\tau) = S(Z;\tau), G(z^*)= G(Z)\}.
\end{equation*}
In many settings, the number of feasible treatment assignments $\lvert \calZ_{S,G}(\tau)\rvert$ becomes very small as $\tau \to \pm\,\infty$. In Example~\ref{ex:enrichment}, this may for instance occur if we select the group with high genetic risk score in the realized trial ($S(Z;\tau) = \text{only high}$) and test a null hypothesis for a very small treatment effect $\tau \ll0$. In this case, there are few alternative treatment assignments $z^*$ that lead to choosing the same group ($S(z^*;\tau) = \text{only high}$) because a small~$\tau$ indicates that the treatment is more effective in the group with low genetic risk. For this reason, there are not many summands in the numerator and denominator of the $P_\textnormal{sel}$-formula~\eqref{eq:p-sel-formula}. This has two consequences: (i) the selective randomization p-value can assume only a comparatively small number of values which complicates rejecting at small $\alpha$-levels; (ii) $P_\textnormal{sel}$ becomes unstable in the sense that small changes in $\tau$ can lead to big differences in the corresponding p-values. 
}
Therefore, the p-value curve $\tau \mapsto P_\text{sel}(\tau)$ becomes increasingly jagged as $\tau \to \pm\,\infty$. As a consequence, the confidence set defined via inversion of tests may be disconnected and the p-value curve may cross any significance level multiple times, which complicates the definition of the HL estimator and computation of the confidence set. We demonstrate this phenomenon with the two-stage trial example~\ref{ex:enrichment}. More details on this simulation study can be found in Section~\ref{sec:simulation-study} and Supplementary Material~\ref{app:details-hold-out}. When the selection of the null hypothesis depends on data from the first \emph{and} second stage, the first panel of Figure~\ref{fig:hold-out} shows that the p-value curve may be far from monotone. For instance, it crosses the level $1/2$ six times yielding six potential point estimates.

\begin{figure}
        \centering
        \includegraphics[scale=0.6]{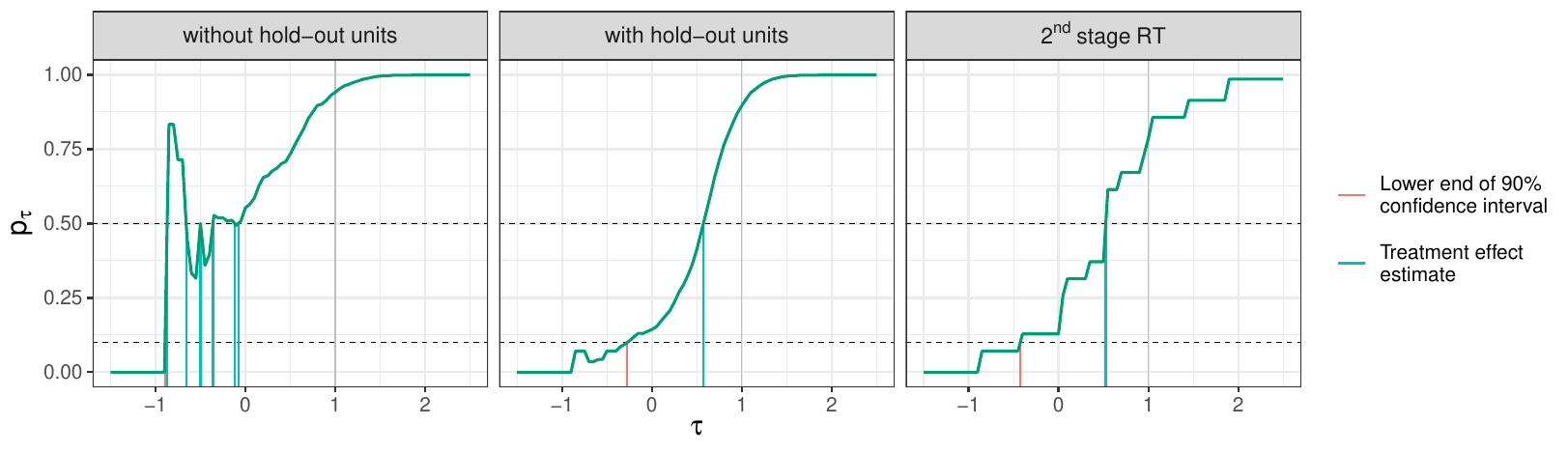}
        \caption{Smoothing effect of hold-out units. In the first panel, the p-value curve (green) is plotted for a selection rule that uses stage 1 and 2 units; in the second panel, only stage 1 units are used for selection; in the third panel, only stage 1 units are used for selection but inference is based on the second stage randomization p-value, cf.\ data splitting. The true treatment effect equals 1 in this simulation.}
    \label{fig:hold-out}
\end{figure}

\revc{To mitigate the risk of a non-monotonic p-value curve and the associated problems, we can use a \emph{hold-out} approach. That is to say, we use selection statistics that do not depend on the treatment assignments $z^*_{\text{ho}}$ to the so-called hold-out units.} For example, the experimenters may decide on the null hypothesis before
gathering data from a last batch of study participants \revc{or they only use a certain percentage of the units at each stage to choose the design.} This idea is closely related to the concept of data carving in the selective inference literature \parencite{fithian_optimal_2017,panigrahi_carving_2023}. \revc{The support of the selective randomization distribution can then be expressed as
\begin{equation*}
    \calZ_{S,G}^{\text{ho}}(\tau) = \{(z^*_{\text{ad}},z^*_{\text{ho}}) \in \calZ \colon S(z^*_{\text{ad}};\tau) = S(Z_{\text{ad}};\tau), G(z^*_{\text{ad}},z^*_{\text{ho}})= G(Z_{\text{ad}},Z_{\text{ho}})\},
\end{equation*}
where $z^*_{\text{ad}}$ denotes the treatment assignments to the adaptive (i.e.\ not hold-out) units.} Since $S$ does not depend on the treatments assigned to hold-out units, the number of alternative treatments $z^*$ cannot sink below a certain threshold even for extreme values of~$\tau$:
\revc{
\begin{equation*}
    \big\lvert \calZ_{S,G}^{\text{ho}}(\tau) \big\rvert \,\,\geq\,\, \big\lvert
    \{ (Z_{\text{sel}}, z^*_{\text{ho}}) \colon
    S(Z_{\text{sel}};\tau) = S(Z_{\text{sel}};\tau),\
    G(Z_{\text{sel}}, z^*_{\text{ho}}) = G(Z_{\text{sel}}, Z_{\text{ho}}) \} \big\rvert.
\end{equation*}
}

In the second panel of Figure~\ref{fig:hold-out}, we plot the p-value curve when only the first stage data is used for selection of the null hypothesis and the second stage participants serve as hold-out units. The curve is still not monotone but clearly smoother. It yields a well-defined HL estimator and the $90\%$ confidence set is a simple interval.

In order to compute the p-value curve and thus obtain an estimate or confidence interval for $\tau$, we typically use Monte Carlo methods like rejection sampling or the RWM sampler. The Supplementary Material~\ref{sec:computation} contains the details on these methods and Subsection~\ref{sec:compute.confidence.interval} discusses potential shortcuts for computing confidence intervals.

\section{Numerical Experiments}\label{sec:empirical}

We use simulations and hypothetical datasets generated from a real
clinical trial to compare three types of randomization tests: the
naive randomization test (RT) given in~\eqref{eq:naive-rand-p-value},
which does not account for the adaptive design, the data-splitting
randomization test (RT 2nd) in~\eqref{eq:2nd-stage-p-value}, which
only uses the second stage data for inference, as well as the proposed
selective randomization test (SRT) in \eqref{eq:selective-p-value} and
\eqref{eq:rand-p-value}. 
The naive and data-splitting randomization p-values can be computed
via a simple Monte Carlo approximation as sampling from the
corresponding randomization distribution is straightforward. For the
selective randomization p-value, we compare two Monte Carlo schemes:
rejection sampling (RS) and the random-walk Metropolis-Hastings (RWM)
sampler.


The
source code of the numerical experiments is available in the GitHub
repository
\url{https://github.com/ZijunGao/Selective-Randomization-Inference-for-Adaptive-Experiments}. 
Some additional numerical results can be found in Supplementary Material~\ref{app:additional-results-num-exp}, including a comparison of different step sizes for RWM.


\subsection{Simulation Study}\label{sec:simulation-study}

To validate our proposed method, we use a simulation study motivated by the hypothetical two-stage enrichment trial in Section~\ref{sec:introduction}. \revc{Note that the design and null hypothesis only depend on data from the first stage; hence, the units in the second stage serve as hold-out units.} We use the following notations and data-generating mechanism. The population of interest is subdivided into two groups with low and high genetic risk score, respectively; this is captured in the covariate $X_i \in \{\text{low}, \text{high}\}$, where~$i$ is the patient index. Every participant is assigned to the new drug ($Z_i=1$) or the placebo ($Z_i=0$).
In both stages of the trial, we use a
completely randomized treatment assignment mechanism. That is,
we assign half of the units to the new drug and half of the units to the placebo with equal probability. The outcome variable of interest $Y_i$ is continuous (e.g.\ cholesterol level). 
In this simulation study, we generate the potential outcomes $Y_i(1) = Y_i(0)$ as i.i.d.\ standard normals. Hence, there is no treatment effect in either group and the sharp null hypothesis holds true. In the first stage, we simulate $n_1=100$ recruited patients, 50 with low and high genetic risk score, respectively. To select the group(s) to recruit from in the second stage, we compare the standardized average treatment effects $\Delta_x$ in the first stage, which are defined as
\begin{equation}\label{eq:standardized-ate}
    \Delta_x = \frac{\bar{Y}_{x,1} - \bar{Y}_{x,0}}{\sqrt{\hat{\sigma}^2_{x,1} + \hat{\sigma}^2_{x,0}}},\qquad
    \hat{\sigma}^2_{x,z} = \frac{
    \sum_{i=1}^{n_1} \one{\{X_i=x, Z_i=z\}}\,
    (Y_i-\bar{Y}_{x,z})^2}{\sum_{i=1}^{n_1}
    \one{\{X_i=x, Z_i=z\}}},\qquad
    x \in \{\text{low}, \text{high}\},\, z \in\{0,1\},
\end{equation}
where $\bar{Y}_{x,z}$ denotes the average outcome in group $x$ under treatment $z$. We define the selection statistic $S$ in terms of the scaled ATE difference $\Delta = (\Delta_{\text{high}} - \Delta_{\text{low}})/\sqrt{2}$ as follows
\begin{equation}\label{eq:selection-rule}
    S = \begin{cases}
        \text{only low}, &\quad \Delta < \Phi^{-1}(0.2),\qquad \text{recruit 40 from group $X_i=\text{low}$},\\
        \text{only high}, &\quad \Delta > \Phi^{-1}(0.8),\qquad \text{recruit 40 from group $X_i=\text{high}$},\\
        \text{both}, &\quad \text{otherwise},
        \hspace{0.62cm}
        \qquad\,
        \text{recruit 20 from group $X_i=\text{low}$ and $X_i=\text{high}$ each}.
    \end{cases}
\end{equation}
Here $\Phi^{-1}(\cdot)$ denotes the quantile function (inverse cumulative distribution function) of the standard normal distribution. This selection rule is motivated by the fact that if the distribution of the outcomes is the same under placebo and drug for both groups and the data points are i.i.d., then $\Delta$ approximately follows a standard normal distribution. Note that these assumptions are only used to design the experiment but not to analyse the data.

For the test statistic $T$, we use the standardized ATE estimator
in~\eqref{eq:standardized-ate} applied to the selected group(s) with data from both stages; \textcolor{black}{a discussion of test statistic selection, particularly adaptive test
statistics, is provided in Supplementary Material~\ref{app:test.statistic}
}.
A larger value of the test statistic $T$ indicates a stronger treatment effect.
We perform one-sided hypothesis tests on null hypotheses across a range of treatment effects $\tau \in \{-1, -0.8, \ldots, 0.8, 1\}$, where smaller treatment effects should have a higher probability of being rejected.
\textcolor{black}{Each p-value is based on $400$ Monte Carlo samples.}
We set the significance level to $\alpha=0.1$ and generate 400 datasets from the data generating mechanism.

In \Cref{fig:default}, we report the empirical rejection probabilities for each null hypothesis and also display them according to the subgroup selected in stage 1: subgroup $X_i=\text{low}$, subgroup $X_i=\text{high}$ or both subgroups. SRT and RT 2nd both control the selective type-I errors.
\textcolor{black}{Although SRT is not always more powerful than data splitting (RT 2nd), as the two tests are applied to different subsets of units, SRT leverages the additional information from the first stage and therefore usually achieves higher power, especially when both subgroups are selected.} Moreover, the power curves do not significantly differ across rejection sampling and RWM. 

In addition, we compare the computation time in a personal desktop for
a single 
RT, RT 2nd, SRT with rejection sampling, and SRT with RWM sampling. RT
and RT 2nd are the most efficient, followed by RWM. Rejection sampling
is significantly slower, especially when only one of the subgroups is
selected and the hypothesis being tested is extreme. In this scenario,
the rejection probability in RS is very high, leading to a
considerable computational burden.




\begin{figure}[t]
        \centering
    \begin{minipage}{0.96\textwidth}
                \centering
                \includegraphics[clip, trim = 0.5cm 0cm 0cm 0cm, width = \textwidth]{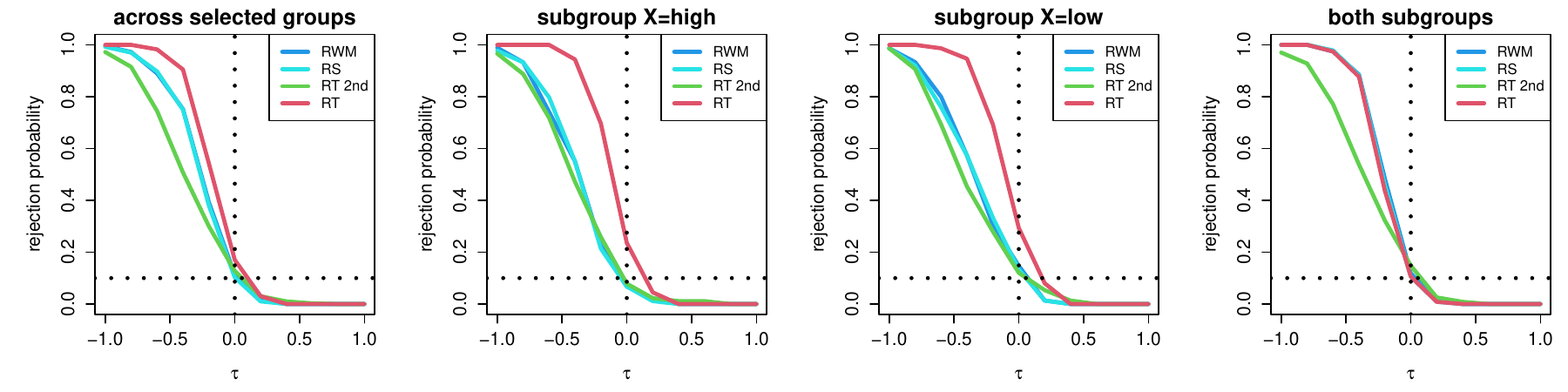}
                \subcaption[(a)]{Rejection probability.}
        \end{minipage}
        \begin{minipage}{0.96\textwidth}
            \centering
            \includegraphics[clip, trim = 0.5cm 0cm 0cm 0cm, width = \textwidth]{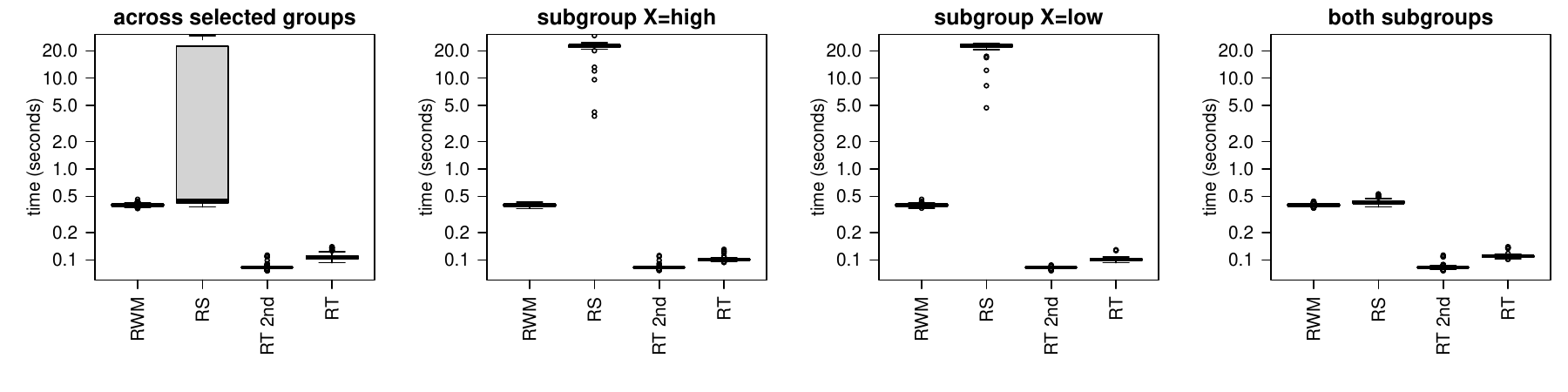}
            \subcaption[(b)]{Computation time.}
        \end{minipage}
        \caption{
        Comparison of randomization tests regarding rejection probabilities and computation time for testing a sequence of nulls of constant treatment effect $\tau$. The potential outcomes satisfy $\tau = 0$.
        The leftmost column reports the rejection probability and computation time averaged over all subgroup selections; the right three columns report the results among trials with a specific subgroup selection: selecting only the subgroup with high $X_i$ ($S = \text{only high}$), only the subgroup with low $X_i$ ($S = \text{only low}$), or both subgroups with high or low $X_i$ ($S = \text{both}$).
        \textcolor{black}{The results are aggregated over $400$ repeats, and the standard deviation of the rejection probabilities is bounded above by $0.025$.}}
    \label{fig:default}
\end{figure}







Lastly, we compare the confidence intervals constructed via RT, RT 2nd and SRT.
Since the p-value curve $\tau \mapsto P_\text{sel}(\tau)$ is not guaranteed to be monotone, we resort to simply specifying an interval of treatment effect, discretizing it and conducting a hypothesis test for each resulting bin.
We construct one-sided confidence intervals of the form $[\underline{\tau}, \infty)$ and evaluate their coverage.
Furthermore, we compute the average value of $\underline{\tau}$, where
a greater value of $\underline{\tau}$ indicates higher
power. Consistent with the previous results, the confidence intervals
based on RT achieved an empirical coverage of $85.8\%$ when the
nominal level is $90\%$. Particularly when only one of the subgroups is selected, RT
has an undercoverage of more than 10\%. On the contrary, SRT and RT
2nd achieve the nominal coverage rate. Moreover, the lower confidence bound
obtained by SRT is in general larger than that of RT 2nd, which again
shows the SRT is more powerful than RT 2nd. The detailed results can
be found in Supplementary Material~\ref{app:additional-results-num-exp}. 




\subsection{Hypothetical Real Data Analysis}\label{sec:real.data}

\begin{table}[t]
\caption{Randomization tests on a hypothetical dataset generated from the SPRINT trial data.}
\label{tab:overall}

\begin{subtable}[t]{0.65\textwidth}
\caption{Data summary and p-values of Fisher's randomization test per subgroup for stage 1 and 2.}
\label{tab:real.data}
\centering
\begin{tabular}[t]{cc|rrrr|r}
\toprule
\multicolumn{2}{c|}{Stage}                     & \multicolumn{4}{c|}{Stage 1} & Stage 2 \\ 
\multicolumn{2}{c|}{Age group}                     &  $\le 60$   & $[60, 69]$    &  $[70, 79]$   &  $\ge 80$   & $\ge 80$ \\ \midrule
Control & $Y = 1$ &  8   &  17   &   22  &   19  &  17 \\ 
&  $Y = 0$ &  187   &   349  &   275  &   123  & 87 \\ 
Treated &  $Y = 1$ &  12   &  13   &   22  &   7  & 13 \\ 
&  $Y = 0$ &  201   &   331  &   289  &  125   & 83 \\
\midrule
\multicolumn{2}{c|}{Fisher's exact test} & 0.823 & 0.307 & 0.460 & 0.005 & 0.314
\\\bottomrule
\end{tabular}
\end{subtable}
\hspace{0.5cm}
\begin{subtable}[t]{0.25\textwidth}
\caption{P-values for the null hypothesis of no treatment effect in the age group $\geq 80$. 
}
\label{tab:real.data.pvalues}
\centering
\begin{tabular}[t]{c|c}
\toprule
     &  p-value\\\midrule
     SRT & 0.048 \\
     RT 2nd & 0.314 \\
     RT & 0.027\\\bottomrule
\end{tabular}
\end{subtable}
\end{table}

Next, we create a hypothetical two-stage enrichment trial from a real one-stage randomized experiment and compare RT, RT 2nd and SRT.
The hypothetical dataset is generated from the Systolic Blood Pressure Intervention Trial (SPRINT) \parencite{sprint2016systolic, gao2021assessment} which studies whether a new treatment program for reducing systolic blood pressure will lower the risk of a cardiovascular disease (CVD).
The primary clinical outcome is the occurrence of a major CVD event.
We classify the subjects into four age subgroups: $\leq 59$, $60$ to
$69$, $70$ to $79$, and $\geq 80$.
For stage~1, we first randomly sample $2000$ units without replacement, then estimate the ratio of the probability of a CVD event in the treatment and control arm (i.e.\ relative risk) 
for each age group, and lastly select the age group with the smallest relative risk, i.e.\ the age group with the most significant improvement. For stage~2, we randomly sample an additional $200$ units without replacement from the units of the selected age group that were not already part of the first stage.
An instance of this simulated dataset is summarized in
\Cref{tab:real.data}, where the $\geq 80$ age group is selected after
stage~1.

We test the sharp null that there is no treatment effect in the selected group. To this end, we use the relative risk as the test statistic and approximate the p-value using $1000$ Monte Carlo samples.
For the selective randomization test, we use the rejection sampling algorithm.
As shown in \Cref{tab:real.data.pvalues}, in this instance SRT reports
a smaller p-value than RT 2nd. The naive randomization test (RT),
which does not control the selective type-I error, gives an even
smaller p-value.

\begin{figure}[t]
    \centering
    \begin{subfigure}[t]{0.45\textwidth}
         \centering
        \includegraphics[clip, trim = 0.5cm 0cm 0cm 0cm, width = \textwidth]{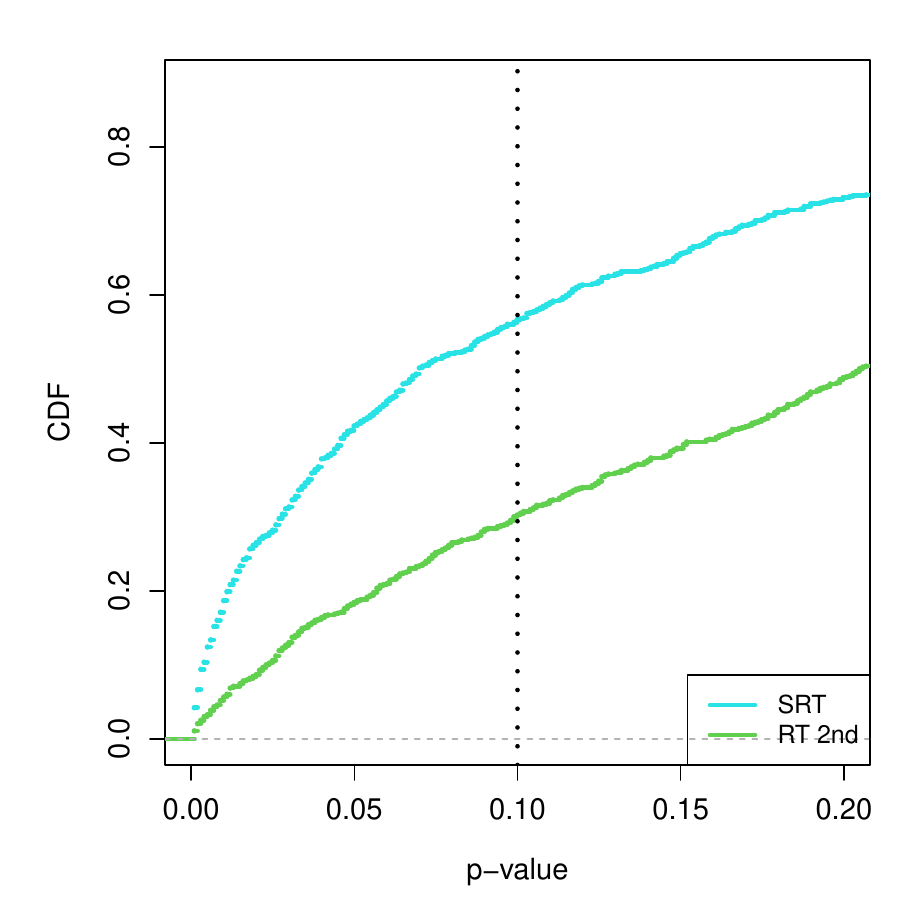}
         \caption{Power: CDF of SRT and RT 2nd p-values.}
         \label{fig:real.data}
     \end{subfigure}
     \hfill
     \begin{subfigure}[t]{0.45\textwidth}
         \centering
         \includegraphics[clip, trim = 0.5cm 0cm 0cm 0cm, width = \textwidth]{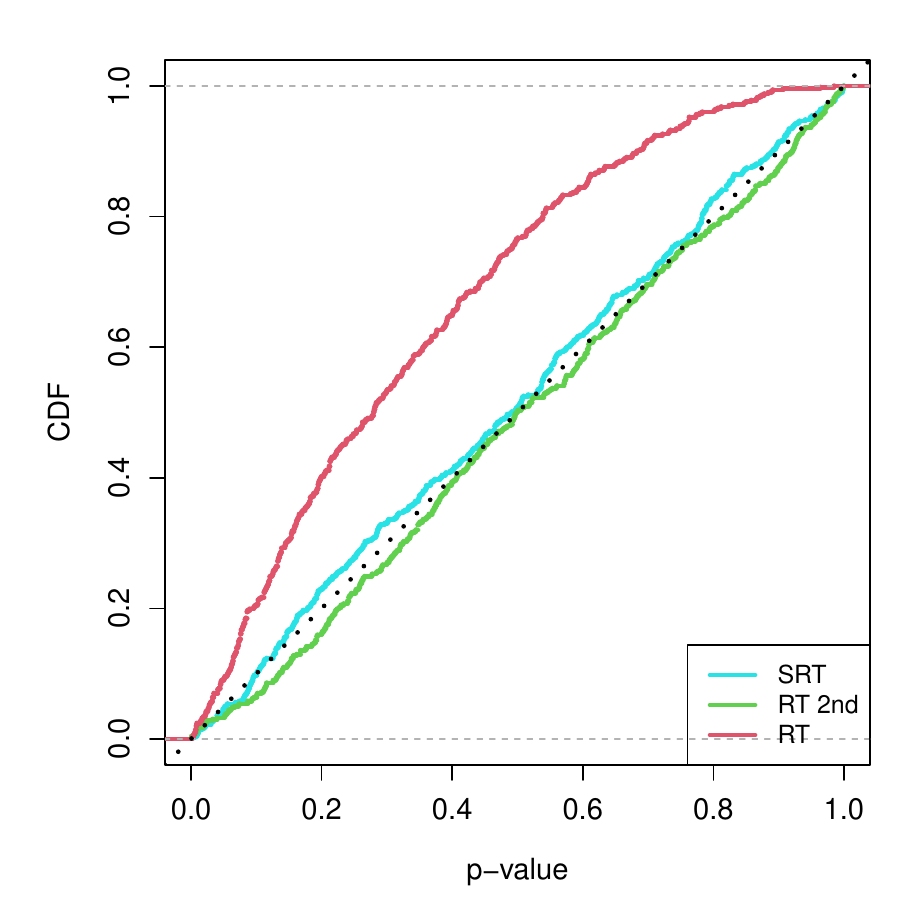}
         \caption{Type-I error: CDF of SRT, RT 2nd and RT p-value in the placebo analysis.}
         \label{fig:real.data.placebo}
     \end{subfigure}
     \caption{Analysis of power and type-I error control in hypothetical datasets generated from the SPRINT trial.}
\end{figure}

To compare the different randomization tests more systematically, we generate 1000 hypothetical datasets according to the procedure above focusing on the trials where the age group $\geq 80$ was selected. In Figure~\ref{fig:real.data}, we plot the empirical cumulative distribution functions (CDFs) of RT 2nd and SRT. Clearly, the selective randomization p-value stochastically dominates the second stage randomization p-value leading to an increase in power. In particular, SRT rejects the null hypothesis of no treatment effect at level $0.1$ in 56.6\% of the generated datasets, whereas RT 2nd only rejects this null hypothesis in 30.2\% of the datasets.

To evaluate type-I error control, we perform a placebo analysis: We
generate multiple ``negative control'' datasets by only sampling
units from the control group of the SPRINT trial.  As before, we focus
on datasets where the age group $\geq 80$
was selected. Instead of using the treatments assigned in the original
study, we generate ``fake treatments'' according to independent fair
coin flips. The rest of the design is kept the same.
Since all the units in the
constructed datasets were part of the SPRINT control group, they did
not receive the treatment and the null hypothesis that the artificial
treatment has no effect is true. Hence, p-values that control the
selective type-I error have a uniform distribution. In
Figure~\ref{fig:real.data.placebo}, we plot the empirical CDFs of the
different randomization p-values: While SRT and RT 2nd are close to
the diagonal line and thus are basically uniformly distributed, the
CDF of RT is stochastically larger, again showing that the naive
randomization test does not control the selective type-I error.



\section{Discussion and Outlook}\label{sec:discussion}


To analyse data from adaptive studies, we use the paradigm of
randomization inference due to its two main advantages. First,
randomization inference does not require any modelling assumptions on
the distribution of the covariates and outcomes or independence of the
data points. Therefore, it is an ideal default test of the null
hypothesis that can complement a further model-based analysis that
relies on additional assumptions. Second, randomization inference also
facilitates post-selection inference. Our proposed selective
randomization p-value can accommodate a very wide range of adaptive
designs due to a simple observation: in standard randomization
inference, we compare the realized treatment assignment $Z$ to all
alternative assignments $Z^*$; in adaptive studies, however, it is
fair to only compare $Z$ to $Z^*$ that would lead to the same design
choices. In other words, it suffices to condition on $S(Z^*)=S(Z)$
regardless of how the selection statistic $S$ changes the design of
the future stages. \revc{
Crucially, this approach controls the selective (or conditional) type-I error in contrast to the unconditional type-I error. That is, the inference is valid given the data from the realized experiment rather than merely controlling the type-I error on average across all possible realizations of the experiment.

Selective inference has been also used in the commonplace super-population framework with i.i.d.\ outcomes in order to analyse adaptive experiments. \textcite{chen_optimal_2023} and \textcite{andrews2024inference} consider parametric models and derive asymptotically valid methods with and without adaptive treatment assignment distribution, respectively. These works rely on arguably strong assumptions on the parameter of interest, selection procedure and treatment assignment distribution. In our article, we specifically chose to use randomization inference 
instead to minimize such modelling assumptions. Thus, we provide selective inference methods for adaptive experiments in the finite population framework which complement the previous super-population tools.

\textcite{nair2023weightedmc} take a different path and avoid strong parametric assumptions via a weighted resampling test while still operating in the super-population framework.
Although their p-value bears similarities to our selective randomization p-value~\eqref{eq:p-sel-formula}, the respective interpretations are quite different. Most importantly, \citeauthor{nair2023weightedmc} only consider pre-specified recruitment and null hypothesis; hence, a selective adjustment is not necessary in their setting. Furthermore, our work treats \emph{potential} outcomes as fixed and formulates the null hypothesis as (deterministic) contrasts between them. \textcite{nair2023weightedmc}, on the other hand, treat the \emph{realized} outcomes as stochastic and aim to test probabilistic null hypotheses, like stationarity of the outcomes or conditional independence of the outcomes and the treatment.

}


While this article is mainly concerned with introducing selective
randomization inference and addressing immediate questions such as
computation of  confidence intervals, there are many conceivable
extensions.

First, this work develops a general theoretical framework to
\emph{analyse} data from adaptive experiments. In turn, the developed methodology also has implications on the \emph{design} of such adaptive studies. Specifically, the preliminary data analyses at each stage and how they inform the design of the next stage must obey Assumptions~\eqref{eq:A2} and~\eqref{eq:A3} or their stronger versions~\eqref{eq:A2-star} and~\eqref{eq:A3-star}. \revc{Beyond fulfilling these minimal requirements, it is of great practical interest how to choose the adaptive design (and thus $S$) such that the resulting inference remains powerful and other criteria like patient benefit are still achieved. This is particularly important when experimenters use the hold-out approach and need to decide how many units (and at which stages) are not used for informing the design. 
}
\textcolor{black}{
Similarly, it is of practical relevance to study the power of test statistics, as their choice can significantly affect the usefulness of our proposal \parencite{zhang2025powerful}. In particular, since the test statistic can depend arbitrarily on the selection rule without compromising validity of our proposal, one future direction of interest is to characterize the most powerful selection-dependent test statistics.
}

Second, our work only considers the classical problem of testing a (partially) sharp null hypothesis. In constructing confidence intervals, we further assume the treatment effect is the same for every  individual (in a subgroup). 
Many recent works attempt to relax these assumptions and broaden the scope of randomization inference. For example, \textcite{caughey_randomisation_2023} develops a randomization test for quantiles of the individual treatment effect. Another line of research considers weaker null hypotheses about the average treatment effect among the participants \parencite{ding_paradox_2017, wu_randomization_2021, cohen_gaussian_2022}. Moreover, \emph{conditional} randomization inference has recently been used to analyse observational data in order to account for the matching of units \parencite{pimentel_covariate-adaptive_2024,pimentel2024}. Some of the computational contributions may also help to better approximate the selective randomization p-value.


Third, in some studies it may be desirable to use the first stage data to determine a threshold for a continuous biomarker and only recruit patients that pass that threshold in the second stage \parencite{friede2012conditional, stallard2014adaptive, rosenblum2016group, lai2019adaptive, stallard_adaptive_2023}. For example, in our running example, we may wish to use the first stage to select a cut-off for the genetic risk score. In this case, the selective randomization p-value may reduce to the data splitting p-value, because the selected cut-off may be uniquely determined by the first-stage treatment assignment. Therefore, it would be interesting to investigate whether we can retain some information from the first stage data by using special selection rules \textcolor{black}{\parencite{gao_selective_2025}} and making additional assumptions about treatment effect heterogeneity.
Beyond this extension, it would also be of interest to consider the occurrence of drop-out units, i.e.\ participants with missing outcomes, in adaptive experiments. Solutions to this problem include model-based or non-informative outcome imputation as well as reweighting the treatment assignment probabilities of the units with observed outcomes, e.g.\ \textcite{edgington2007randomization, rosenberger_randomization_2019, ivanova2022randomization, heussen2023randomization}.

\section*{Funding and Acknowledgements}
TF was supported by a PhD studentship from GlaxoSmithKline Research \&
Development Limited. QZ was partly supported by the Engineering and
Physical Sciences Research Council (grant number EP/V049968/1).

\clearpage
\appendix
\renewcommand{\thesection}{S\arabic{section}}
\renewcommand{\theequation}{\thesection.\arabic{equation}}
\renewcommand{\thefigure}{S\arabic{figure}}
\renewcommand{\thetable}{S\arabic{table}}
\renewcommand{\theproposition}{S\arabic{proposition}}
\renewcommand{\thedefinition}{S\arabic{definition}}
\renewcommand{\thelemma}{S\arabic{lemma}}
\renewcommand{\thecorollary}{S\arabic{corollary}}
\renewcommand{\thetheorem}{S\arabic{theorem}}

\section*{Supplementary Material}
\addcontentsline{toc}{section}{Supplementary Material}


\section{Imputability}\label{supp:computability}
\revc{
As explained in Section~\ref{sec:computability}, it is not obvious if the selective randomization p-value can be computed when the null hypothesis is only partially sharp.
Here, we elaborate in more detail on properties of $T$, $S$ and $q$ as well as the conditioning function $G$ that render former imputable. First, we state an intuitive property of quantities depending on the potential outcomes~$Y_R(\cdot)$ and the treatment~$Z$.
\begin{definition}
    {A function $f$ of $Z,R,X_R$ and $Y_R(\cdot)$ is called \emph{single-world} if $f(z,R,X_R,Y_R(\cdot)) = f(z,R,X_R,Y_R(z))$ for all $z\in \calZ$.}
\end{definition}
In essence, this means that $f$ can only depend on outcomes that are observed under the treatment $Z$ or, in other words, $f$ is independent of potential outcomes that would be realized under different treatment assignments (alternative worlds). The selection statistics $S_1,\ldots,S_K$ and treatment assignment distributions $q_1,\ldots,q_K$ are single-world objects by necessity; otherwise, they could not be evaluated when conducting the adaptive experiment. (For this reason, we defined them as functions of $Y_R$ instead of $Y_R(\cdot)$ in Section~\ref{sec:notation}.) Going forward, we also assume that the test statistic~$T$ is a single-world function. While not strictly necessary, imposing this condition is very intuitive. For instance, if $T$ is not single-world, even its value under the \emph{realized} treatment $Z$ would depend on unobserved potential outcomes.

To conduct randomization inference, we need a (partially) sharp null hypothesis that lets us impute unobserved potential outcomes from the observed ones $Y_R$. For stage $k$ with realized treatment assignment $Z_k$, we define the set of all imputable potential outcomes~as
\begin{equation*}
    Y_{R_k}^\text{imp}(\cdot) := \{Y_{R_k}(z_k^*)\colon\, z_k^* \in \calZ_k, Y_{R_k}(Z_{k})=Y_{R_k}(z_{k}^*) \text{ under } \mathrm{H}_0\} \quad \subseteq\quad Y_{R_k}(\cdot).
\end{equation*}
Analogously to the previous notation, we denote the imputable potential outcomes up to the $k$-th stage $Y_{R_{[k]}}^\text{imp}(\cdot)$ and all imputable potential outcomes $Y_{R}^\text{imp}(\cdot)$. Under Fisher's sharp null hypothesis, all potential outcomes $Y_R(\cdot)$ can be imputed, i.e.\ $Y_{R}^\text{imp}(\cdot) = Y_R(\cdot)$, and the randomization p-value can be computed. Under a partially sharp null hypothesis $Y_{R}^\text{imp}(\cdot)$ is a proper subset of $Y_R(\cdot)$, however. To ensure that we can still evaluate $S$, $q$ and $T$, we restrict the support of the treatment assignment distribution via conditioning on $G = G(Z)$ so that we only sample treatments $z^*$ whose pertaining outcomes $Y_R(z^*)$ can be imputed. We formalize this idea in the following definition.

\begin{definition}\label{def:imputability}
    For $k \in [K]$, let $G_k$ be a function of $Z_k, R_k$ and $X_{R_k}$; denote $G = (G_1, \ldots, G_K)$ and $G_{[k]} = (G_1, \ldots, G_k)$. Suppressing the dependence on $R$ and $X_R$ in the notation, we define \emph{imputability} for $T$, $q_k$ and $S_k$ under the null hypothesis with respect to $G$ as follows:
    \begin{enumerate}
        \item Test statistic: $T$ is imputable if $Y_{R}(z^*) \in Y_R^\text{imp}(\cdot)$ for all $z^* \in \calZ$ such that $G(z^*) = G(Z)$ and $S(z^*) = S(Z)$.
        \item Treatment assignments: $q_k$ is imputable if $Y_{R_{[k-1]}}(z_{[k-1]}^*) \in Y_{R_{[k-1]}}^\text{imp}(\cdot)$ for all $z_{[k-1]}^* \in \calZ_{[k-1]}$ such that $G_{[k-1]}(z_{[k-1]}^*) = G_{[k-1]}(Z_{[k-1]})$ and $S_{[k-1]}(z_{[k-1]}^*) = S_{[k-1]}(Z_{[k-1]})$.
        \item Selection statistics: $S_k$ is imputable if $Y_{R_{[k]}}(z_{[k]}^*) \in Y_{R_{[k]}}^\text{imp}(\cdot)$ for all $z_{[k]}^* \in \calZ_{[k]}$ such that $G_{[k]}(z_{[k]}^*) = G_{[k]}(Z_{[k]})$ and $S_{[k-1]}(z_{[k-1]}^*) = S_{[k-1]}(Z_{[k-1]})$.
    \end{enumerate}
    If the conditions above are fulfilled for every $k$, we call $T$, $q$ and $S$ \emph{imputable under the null hypothesis with respect to $G$}.
\end{definition}


In Definition~\ref{def:imputability}, we state conditions for imputability in a very general manner. In practice, however, it is usually quite intuitive how to choose a conditioning function $G$ such that $T$, $q$ and $S$ are imputable. Consider for instance the two-stage enrichment design in Example~\ref{ex:enrichment}: If we recruit participants with both low and high genetic risk score in the second stage ($S_1 = \text{both}$), the corresponding null hypothesis is sharp, therefore all potential outcomes are imputable and we do not need to condition on any $G$. If only people with high genetic risk score are recruited in the second stage ($S_1 = \text{only high}$), we cannot impute unobserved potential outcomes from units with low genetic risk in the first stage. Therefore, we need to ensure that alternative treatments~$z^*_1$ agree with the realized values~$Z_1$ for those units.
This is accomplished by conditioning on the function $G_1(Z_1,R_1,X_{R_1}) = (Z_{1,i}\one\{X_{i} = \text{low}\})_{i \in R_1}$. In fact, we can set $G(Z) = G_1(Z_1)$ as the null hypothesis is sharp for outcomes in the second stage. Now it is easy to verify that all imputability conditions in Definition~\ref{def:imputability} are indeed satisfied for the two-stage enrichment trial. The case of recruiting people with low genetic risk score ($S_1 = \text{only low}$) works analogously.
}

\section{Proof of Proposition~\ref{prop:computability}} \label{app:proof-computability}

First, we derive the factorization of the selective randomization distribution. \revc{For ease of exposition, we consider events where each variable takes one specific value; for a more rigorous version of the proof, one may replace them with arbitrary events of the $\sigma$-field generated by the random variables. Let ${z \in \calZ}, s \in \calS, g \in \mathcal{G}$ and let $r$, $x_r$ and $y_r(\cdot)$ denote arbitrary values of $R$, $X_R$ and $Y_R(\cdot)$, respectively.}
We repeatedly use the definition of conditional independence to construct a factorization where each random variable/vector is only conditioned on variables that predate it in the topological ordering of the DAG (and in time):
\begin{align*}
    &\Prob(Z = z\mid R = r, X_R=x_r,Y_R(\cdot) = y_r(\cdot), S(Z)=s, G(Z)=g)\\
    &= \frac{\Prob(Z = z, R = r, X_R=x_r,Y_R(\cdot) = y_r(\cdot), S(Z)=s, G(Z)=g)}{\Prob(R = r, X_R=x_r,Y_R(\cdot) = y_r(\cdot), S(Z)=s, G(Z)=g)} \\
    &\propto \begin{aligned}[t]
    \Prob(G(Z) &= g\mid Z=z, R = r, X_R=x_r,Y_R(\cdot) = y_r(\cdot), S(Z)=s)\,\cdot \\[-0.5ex]
    \prod_{k=1}^K\,\, &\Prob(R_k = r_k \mid R_{[k-1]} = r_{[k-1]}, X_{R_{[k-1]}}=x_{r_{[k-1]}},Y_{R_{[k-1]}}(\cdot) = y_{r_{[k-1]}}(\cdot), Z_{[k-1]}=z_{[k-1]}, S_{[k-1]}(Z_{[k-1]}) = s_{[k-1]})\,\cdot \\[-1.5ex]
    &\begin{aligned}
        \Prob(X_{R_k} = x_{r_k}, Y_{R_k}(\cdot) = y_{r_k}(\cdot) \mid\,
    &R_{[k]} = r_{[k]}, X_{R_{[k-1]}}=x_{r_{[k-1]}},Y_{R_{[k-1]}}(\cdot) = y_{r_{[k-1]}}(\cdot),\\ &Z_{[k-1]}=z_{[k-1]}, S_{[k-1]}(Z_{[k-1]}) = s_{[k-1]})\,\cdot
    \end{aligned}\\[1ex]
    &\Prob(Z_k = z_k \mid R_{[k]} = r_{[k]}, X_{R_{[k]}}=x_{r_{[k]}},Y_{R_{[k]}}(\cdot) = y_{r_{[k]}}(\cdot), Z_{[k-1]}=z_{[k-1]}, S_{[k-1]}(Z_{[k-1]}) = s_{[k-1]})\, \cdot \\[1ex]
    &\Prob(S_{k}(Z_{[k]}) = s_k \mid R_{[k]} = r_{[k]}, X_{R_{[k]}}=x_{r_{[k]}},Y_{R_{[k]}}(\cdot) = y_{r_{[k]}}(\cdot), Z_{[k]}=z_{[k]}, S_{[k-1]}(Z_{[k-1]}) = s_{[k-1]}).
    \end{aligned} 
\end{align*}
To simplify this expression, we apply Assumptions~\eqref{eq:A2} and~\eqref{eq:A3}. Moreover, we use the fact that $G$ is a deterministic function of $Z$, $R$ and $X_R$ and $S_k$ is a deterministic function of $R_{[k]}, X_{R_{[k]}}, Y_{R_{[k]}}$ and $Z_{[k]}$, where $k \in [K]$; this is denoted by (D). We obtain
\begin{alignat*}{3}
    &\mathrlap{\Prob(Z = z\mid R = r, X_R=x_r, Y_R(\cdot) = y_r(\cdot), S(Z)=s, G(Z)=g)}\qquad &&&&\\[1ex]
    && &\mathrlap{\propto \one\{G(z)=\,g\}\, \cdot}\qquad && \tag{by
      (D) } \\[-0.5ex]
    &&&& \prod_{k=1}^K\,\, &\Prob(R_k = r_k \mid R_{[k-1]} = r_{[k-1]}, X_{R_{[k-1]}}=x_{r_{[k-1]}},Y_{R_{[k-1]}}(\cdot) = y_{r_{[k-1]}}(\cdot), S_{k-1}(z_{[k-1]}) =
    s_{k-1})\,\cdot \tag{by \eqref{eq:A3}} \\[-1.5ex]
    &&&& &\begin{aligned}
        \Prob(X_{R_k} = x_{r_k}, Y_{R_k}(\cdot) = y_{r_k}(\cdot) \mid\,
    &R_{[k]} = r_{[k]}, X_{R_{[k-1]}}=x_{r_{[k-1]}},Y_{R_{[k-1]}}(\cdot) = y_{r_{[k-1]}}(\cdot),\\ &S_{k-1}(z_{[k-1]}) = s_{k-1})\,
    \cdot
    \end{aligned} \tag{by \eqref{eq:A3}} \\[1.0ex]
    &&&& &\Prob(Z_k = z_k \mid R_{[k]} = r_{[k]}, X_{R_{[k]}} =
    x_{r_{[k]}}, Y_{R_{[k-1]}}(z_{[k-1]}) = y_{r_{[k-1]}}, Z_{[k-1]} =
    z_{[k-1]})\, \cdot \tag{by \eqref{eq:A2}} \\[1ex]
    &&&& &\one{\{S_k(z_{[k]})=s_k\}} \tag{by (D)} \\[1ex]
    && &\mathrlap{\propto \one\{G(z)=\,g, S(z)=s\}\,\cdot}\qquad && \\
    &&&& \prod_{k=1}^K\,\, &\Prob(Z_k = z_k \mid R_{[k]} = r_{[k]}, X_{R_{[k]}} = x_{r_{[k]}}, Y_{R_{[k-1]}}(z_{[k-1]}) = y_{r_{[k-1]}}, Z_{[k-1]} = z_{[k-1]})\\
    && &\mathrlap{\propto \one\{G(z)=\,g, S(z)=s\}\,\cdot q(z\mid r,x_r,y_r).}\qquad &&
\end{alignat*}
If additionally Assumption~\eqref{eq:A2-star} holds, the expression above simplifies further:
    \begin{multline*}
        \Prob(Z = z\mid R=r, X_R=x_r, Y_R(\cdot) = y_r(\cdot), S(Z)=s, G(Z)=g) \\[-1ex]
        \quad\,\propto\, \one{\{G(z)=g, S(z)=s\}} \prod_{k=1}^K \Prob(Z_k = z_k \mid R_k = r_k, X_{R_k} = x_{r_k}, S_{k-1}(z_{[k-1]}) = s_{k-1}).
    \end{multline*}

\revd{In order to derive the formula of the selective randomization p-value, we apply a similar factorization that also resembles the topological order of the variables in the DAG. In this version, however, we also include the alternative treatment assignments $Z^*$:
{\allowdisplaybreaks
\begin{align*}
    &\Prob(Z^* = z^*\mid Z=z, R = r, X_R=x_r,Y_R(\cdot) = y_r(\cdot), S(Z^*)=S(Z), G(Z^*)=G(Z))\\[1ex]
    &= \frac{\Prob(Z^* = z^*, Z=z, R = r, X_R=x_r,Y_R(\cdot) = y_r(\cdot), S(Z^*)=S(Z), G(Z^*)=G(Z))}{\Prob(Z=z,R = r, X_R=x_r,Y_R(\cdot) = y_r(\cdot), S(Z^*)=S(Z), G(Z^*)=G(Z))} \\
    &\propto \begin{aligned}[t]
    \Prob(G(Z^*) &= G(Z)\mid Z^* = z^*, Z=z, R = r, X_R=x_r,Y_R(\cdot) = y_r(\cdot), S(Z^*)=S(Z))\,\cdot \\[-0.5ex]
    \prod_{k=1}^K\,\, &\begin{aligned}[t]
        \Prob(R_k = r_k \mid\, &R_{[k-1]} = r_{[k-1]}, X_{R_{[k-1]}}=x_{r_{[k-1]}},Y_{R_{[k-1]}}(\cdot) = y_{r_{[k-1]}}(\cdot),\\
        &Z^*_{[k-1]}=z^*_{[k-1]},Z_{[k-1]}=z_{[k-1]}, S_{[k-1]}(Z^*_{[k-1]}) = S_{[k-1]}(Z_{[k-1]}))\,\cdot
    \end{aligned}\\[1ex]
    &\begin{aligned}
        \Prob(X_{R_k} = x_{r_k}, Y_{R_k}(\cdot) = y_{r_k}(\cdot) \mid\,
        &R_{[k]} = r_{[k]}, X_{R_{[k-1]}}=x_{r_{[k-1]}},Y_{R_{[k-1]}}(\cdot) = y_{r_{[k-1]}}(\cdot),\\ &Z^*_{[k-1]}=z^*_{[k-1]}, Z_{[k-1]}=z_{[k-1]}, S_{[k-1]}(Z^*_{[k-1]}) = S_{[k-1]}(Z_{[k-1]}))\,\cdot
    \end{aligned}\\[1ex]
    &\begin{aligned}[t]
        \Prob(Z_k = z_k \mid\, &R_{[k]} = r_{[k]}, X_{R_{[k]}}=x_{r_{[k]}},Y_{R_{[k]}}(\cdot) = y_{r_{[k]}}(\cdot),\\
        &Z^*_{[k-1]} = z^*_{[k-1]}, Z_{[k-1]}=z_{[k-1]}, S_{[k-1]}(Z^*_{[k-1]}) = S_{[k-1]}(Z_{[k-1]}))\, \cdot
    \end{aligned}\\[1ex]
    &\begin{aligned}[t]
        \Prob(Z^*_k = z^*_k \mid\, &R_{[k]} = r_{[k]}, X_{R_{[k]}}=x_{r_{[k]}},Y_{R_{[k]}}(\cdot) = y_{r_{[k]}}(\cdot),\\
        &Z^*_{[k-1]} = z^*_{[k-1]}, Z_{[k]}=z_{[k]}, S_{[k-1]}(Z^*_{[k-1]}) = S_{[k-1]}(Z_{[k-1]}))\, \cdot
    \end{aligned}\\[1ex]
    &\begin{aligned}
        \Prob(S_{k}(Z^*_{[k]}) = S_{k}(Z_{[k]}) \mid\, &R_{[k]} = r_{[k]}, X_{R_{[k]}}=x_{r_{[k]}},Y_{R_{[k]}}(\cdot) = y_{r_{[k]}}(\cdot),\\
        &Z^*_{[k]}=z^*_{[k]}, Z_{[k]}=z_{[k]}, S_{[k-1]}(Z^*_{[k-1]}) = S_{[k-1]}(Z_{[k-1]})).
    \end{aligned}
    \end{aligned} 
\end{align*}
This expression can be simplified via the conditional independences~\eqref{eq:zstar-indep} and~\eqref{eq:zstar-sel}; moreover, we again use that $G$ and the $S_k$ are deterministic functions.
\begin{alignat*}{3}
    &\mathrlap{\Prob(Z^* = z^*\mid Z=z, R = r, X_R=x_r, Y_R(\cdot) = y_r(\cdot), S(Z^*)=S(Z), G(Z^*)=G(Z))}\quad &&&&\\[1ex]
    && &\mathrlap{\propto \one\{G(z^*)=\,G(z)\}\, \cdot}\qquad && \tag{by (D) }
      \\[-0.5ex]
    &&&& \prod_{k=1}^K\,\,
    &\begin{aligned}[t]
        \Prob(R_k = r_k \mid\, &R_{[k-1]} = r_{[k-1]}, X_{R_{[k-1]}}=x_{r_{[k-1]}},Y_{R_{[k-1]}}(\cdot) = y_{r_{[k-1]}}(\cdot),\\ &Z_{[k-1]} = z_{[k-1]}, S_{k-1}(z^*_{[k-1]}) =
    S_{k-1}(z_{[k-1]}))\,\cdot
    \\[-1.5ex]
    \end{aligned} \tag{by \eqref{eq:zstar-sel}}
    \\[1.0ex]
    &&&& &\begin{aligned}[t]
        \Prob(X_{R_k} = x_{r_k}, Y_{R_k}(\cdot) = y_{r_k}(\cdot) \mid\,
    &R_{[k]} = r_{[k]}, X_{R_{[k-1]}}=x_{r_{[k-1]}},Y_{R_{[k-1]}}(\cdot) = y_{r_{[k-1]}}(\cdot),\\ &Z_{[k-1]} = z_{[k-1]}, S_{k-1}(z^*_{[k-1]}) =
    S_{k-1}(z_{[k-1]}))\,
    \cdot
    \end{aligned} \tag{by \eqref{eq:zstar-sel}}
    \\[1.0ex]
    &&&& &\Prob(Z_k = z_k \mid R_{[k]} = r_{[k]}, X_{R_{[k]}} =
    x_{r_{[k]}}, Y_{R_{[k-1]}}(z_{[k-1]}) = y_{r_{[k-1]}}, Z_{[k-1]} =
    z_{[k-1]})\, \cdot \tag{by \eqref{eq:zstar-indep}}
    \\[1ex]
    &&&& &\Prob(Z^*_k = z^*_k \mid R_{[k]} = r_{[k]}, X_{R_{[k]}} =
    x_{r_{[k]}}, Y_{R_{[k-1]}}(z^*_{[k-1]}) = y^*_{r_{[k-1]}}, Z^*_{[k-1]} =
    z^*_{[k-1]})\, \cdot \tag{by \eqref{eq:zstar-indep}}
    \\[1ex]
    &&&& &\one{\{S_k(z^*_{[k]})=S_k(z_{[k]})\}} \tag{by (D)}
    \\[1ex]
    && &\mathrlap{\propto \one\{G(z^*)=\,G(z), S(z^*)=S(z)\}\,\cdot q(z^*\mid r,x_r,y^*_r).}\qquad && \tag{by \eqref{eq:zstar-id}}
\end{alignat*}
}
}


The result now follows from expanding the formula of the selective randomization p-value and applying the factorization above:
{\small
    \begin{align*}
        P_\textnormal{sel} &= \sum\limits_{z^* \in \calZ} \one\{T(z^*,R,X_R,Y_R(\cdot)) \leq T(Z,R,X_R,Y_R(\cdot))\}\cdot
        \Prob(Z^*=z^*\mid R,X_R,Y_R(\cdot),\, S(z^*)=S(Z), G(z^*)=G(Z))\\
        &=\frac{\sum\limits_{z^* \in \calZ}\, \one\big\{T(z^*,R,X_R,Y_R(\cdot)) \leq T(Z,R,X_R,Y_R(\cdot))\big\}\cdot\one\big\{ G(z^*)=G(Z), S(z^*)=S(Z)\big\}\cdot q(z^*\mid R,X_R,Y_R(\cdot))}{\sum\limits_{z^* \in \calZ}\,\one{\big\{G(z^*)=G(Z), S(z^*)=S(Z)\big\}}\cdot q(z^*\mid R,X_R,Y_R(\cdot))}.
    \end{align*}
}This expression can be computed because $T$, $S$ and the randomization distribution of $Z$ are assumed to be imputable with respect to $G$. 


\section{Computation}\label{sec:computation}



Recall that the feasible treatment assignment space/ the support of the distribution of $Z$ is denoted as $\treatmentSpace = \bigtimes_k \treatmentSpace_k$. In order to shorten the notation, we use $\treatmentSpace_{s,g} := \{z \in \treatmentSpace: S(z) = s, G(z) = g\}$ to denote the support of $Z$ conditional on $S(Z)=s$ and $G(Z)=g$, where $s \in \calS$ and $g \in \calG$. (To simplify the notation, we omit the dependence on $R, X_R$ and $Y_R(\cdot)$.)
First, we define the Monte Carlo approximation of the selective randomization p-value and discuss the involved computational challenges stemming from conditioning on $S$ and $G$. Then, we propose two sampling schemes -- rejection sampling and the random walk Metropolis-Hastings sampler -- to approximate the selective randomization p-value. Lastly, we address computing confidence intervals.


\subsection{Monte Carlo Approximation}\label{sec:compute.challenge}

For tests based on randomization or permutation, it is typically infeasible to enumerate all possible treatment configurations as their number grows exponentially in the number of recruited units. Therefore, we use a Monte Carlo approximation of the p-value instead: We generate $M$ samples $Z^{(t)}$, $t \in \{1,\ldots,M\}$, from the selective randomization distribution 
and compute
\begin{align*}
    \hat{P}_M:= \frac{1+\sum_{t=1}^M\, \mathbf{1}\{T(Z^{(t)},R,X_R,Y_R(\cdot)) \leq T(Z,R,X_R,Y_R(\cdot))\}}{1+M}.
\end{align*}
Here the additional one is added to the denominator and numerator to ensure that $\hat{P}_M$ is a valid p-value. We remark that the additional one can also be interpreted as including $Z$ as an additional sample which yields the term $\mathbf{1}\{T(Z,R,X_R,Y_R(\cdot)) \leq T(Z,R,X_R,Y_R(\cdot))\}=1$.

Generating feasible samples $Z^{(t)}$, that is $S(Z^{(t)}) = S(Z)$ and $G(Z^{(t)}) = G(Z)$, 
poses additional challenges compared to traditional sampling problems. 
First, the probability of the conditioning event might be exceedingly small. For example, when one conditions on a continuous selection statistic attaining a specific value, or when the potential outcomes are imputed under some pre-defined null that is at odds with the observed data, it becomes difficult to generate samples that yield the observed selection, cf.\ \Cref{sec:confint}.
Second, the feasible treatment space~$\treatmentSpace$ typically has a simple topology and is convenient to sample from, e.g.\ for completely randomized or Bernoulli trials. 
However, the conditioning restricts the support to $\treatmentSpace_{S(Z),G(Z)}$ and
usually disrupts this helpful structure.
Third, since $\treatment$ is a discrete random vector, gradient-based sampling methods, such as Hamiltonian Monte Carlo or the Metropolis-adjusted Langevin algorithm, are not suitable.

In the following, we introduce two Monte Carlo sampling schemes: (1) rejection sampling and (2) the random walk Metropolis-Hastings (RWM) sampler from the class of Markov Chain Monte Carlo (MCMC) algorithms. While the former is easy to implement and does not require hyper-parameter tuning, it may be computationally expensive. The latter generates samples more efficiently but requires careful choice of hyper-parameters and access to the randomization distribution of $Z$ conditional on some entries of $Z$. Therefore, we recommend rejection sampling when the success probability of generating a feasible sample is not exceptionally low and the RWM sampler otherwise.

\subsection{Rejection Sampling}
\label{sec:rejection.sampling}

Rejection sampling is a versatile procedure that allows for arbitrary treatment assignment schemes and conditioning events. 
In every step of the algorithm, we generate a sample $Z^*$ from the non-selective randomization distribution $\Prob(Z=\cdot \mid R,X_R,Y_R(\cdot))$. If $Z^*$ is feasible, i.e.\ $Z^* \in \calZ_{S(Z),G(Z)}$, we accept it; otherwise, $Z^*$ is rejected. We repeat this process until the desired number of $M$ samples is reached. The pseudocode of this process is summarized in \Cref{algo:rejection.sampling}.

Rejection sampling is a simple and flexible algorithm, but it may entail considerable computational costs. Suppose the success probability $\Prob(\treatment^* \in \treatmentSpace_{S(Z),G(Z)} \mid R,X_R,Y_R(\cdot),Z)$ 
equals $p_0$; then the expected number of draws required to obtain one acceptable treatment is $1/p_0$. When $p_0$ is extremely small (potentially scaling exponentially with the number of randomized units), the expected number of candidates that need to be generated to produce one accepted sample can be considerably high. We encountered this computational challenge in the simulations in \Cref{sec:simulation-study} as well: In the trials where only one subgroup is selected, the conditional randomization test with rejection sampling takes around 150 times longer than the Monte Carlo approximation of the randomization test without conditioning; in other words, roughly only one out of 150 proposals is accepted.

\begin{algorithm}[t]
\caption{Rejection sampling}\label{algo:rejection.sampling}
\begin{algorithmic}
\STATE \textbf{Input}: data $\treatment$, $R,X_R,Y_R(\cdot)$ (observed or imputed); selection and test statistic functions $S(\cdot)$, $G(\cdot)$ and $T(\cdot)$; treatment assignment mechanism $\Prob(Z=\cdot \mid R,X_R,Y_R(\cdot))$; number of samples $M$.
\STATE \textbf{Initialization}: the current number of accepted treatments $t = 0$.
\WHILE{$t < M$}
\STATE Sample $\treatment^*$ from the distribution $\Prob(Z=\cdot \mid R,X_R,Y_R(\cdot))$
\IF{$\treatment^* \in \treatmentSpace_{S(\treatment),G(\treatment)}$}
\STATE Accept $\treatment^*$ and let $\treatment^{(t)} \leftarrow \treatment^*$\\ $t \leftarrow t + 1$
\ELSE
\STATE Reject $\treatment^*$ and continue.
\ENDIF
\ENDWHILE
\STATE Compute $\hat{P}_M = \Big({1+\sum_{t=1}^{M} \one\{T(\treatment^{(t)},R,X_R,Y_R(\cdot)) \leq T(\treatment, R,X_R,Y_R(\cdot))\}}\Big)/({1+M})$.
\STATE \textbf{Output}: $\hat{P}_M$.
\end{algorithmic}
\end{algorithm}

\subsection{Random Walk Metropolis-Hastings Algorithm}\label{sec:RWM}

Markov Chain Monte Carlo (MCMC) algorithms are a class of sampling schemes which construct a Markov chain that has the target distribution -- in our case the selective randomization distribution -- as its stationary distribution. They can be visualized as a random walk in the state space $\calZ$ that yields one sample at every step. There are various MCMC algorithms but we focus on the popular random walk Metropolis-Hastings (RWM) sampler 
\parencite{metropolis1953equation} in this work.



\subsubsection{Algorithm}
We choose a probability measure $\rho$ on the power set of $R$ which guides the ``step length'' of the random walk and initialize the algorithm at the observed treatment assignment $\treatment$. Then we repeat the following iteration $M$ times. In the $t$-th step, the previous sample (state of the Markov chain) is denoted by $Z^{(t-1)}$; to obtain a new sample, we randomly select a subset $A \subseteq R$ of entries of $Z^{(t-1)}$ following $\rho$ for updating their values. Given $A$, we fix the treatment assignments of units not in $A$ and generate new treatment assignments for the units in $A$. That is, we sample a proposal $Z^*$ from the conditional distribution $\Prob(Z=\cdot \mid R,X_R,Y_R(\cdot), Z_{A^C} = Z^{(t-1)}_{A^C})$. Upon obtaining $\treatment^*$, we set $\treatment^{(t)} = \treatment^*$ if $\treatment^* \in \treatmentSpace_{S(Z),G(Z)}$ and reject $\treatment^*$ and use $\treatment^{(t)} = \treatment^{(t-1)}$, otherwise.
The details are summarized in \Cref{algo:MCMC}.
\begin{algorithm}[t]
\caption{Random walk Metropolis-Hastings algorithm}\label{algo:MCMC}
\begin{algorithmic}
\STATE \textbf{Input}:
data $\treatment$, $R,X_R,Y_R(\cdot)$ (observed or imputed); selection and test statistic functions $S(\cdot)$, $G(\cdot)$ and $T(\cdot)$; treatment assignment mechanism $\Prob(Z=\cdot \mid R,X_R,Y_R(\cdot))$; number of samples $M$; a probability measure $\rho$ on the power set of $R$; the burn-in size $b$.
\STATE \textbf{Initialization}: Set $\treatment^{(0)} = \treatment$ (observed value).
\FOR{t $= 1: M$}
\STATE Sample a subset of units $A$ from $\rho$\\
Sample $\treatment^*$ from $\Prob(Z=\cdot \mid R,X_R,Y_R(\cdot), Z_{A^C} = Z^{(t-1)}_{A^C})$ 
\IF{$\treatment^* \in \treatmentSpace_{S(\treatment),G(\treatment)}$}
\STATE $\treatment^{(t)} \leftarrow \treatment^*$.
\ELSE
\STATE $\treatment^{(t)} \leftarrow \treatment^{(t-1)}$.
\ENDIF
\ENDFOR
\STATE Compute $\hat{P}_M = \left({1+\sum_{t=b+1}^{M} \one\{T(\treatment^{(t)}, R,X_R,Y_R(\cdot)) \leq T(\treatment, R,X_R,Y_R(\cdot))\}}\right)/({1+(M-b)})$.
\STATE \textbf{Output}: $\hat{P}_M$.
\end{algorithmic}
\end{algorithm}

This procedure is quite general and abstract but significantly simplifies for common treatment assignment distributions such as the Bernoulli trial or a completely randomized design.

\begin{example}[Bernoulli trial]\label{exam:Bernoulli.trial}
Consider the simple Bernoulli trial described in \Cref{sec:notation}. For each stage $k$, we can specify a window size/step length $h_k \in \{1,\ldots,\lvert R_k \rvert\}$ and randomly choose $h_k$ entries from $\treatment_k^{(t)}$ to replace with independent draws from a $\mathrm{Bernoulli}(p)$-distribution. (This corresponds to using a product measure $\rho = \bigtimes_k \rho_k$, where each $\rho_k$ is supported on the subsets of $R_k$ with size $h_k$ and assigns equal probability to them.)
\end{example}


\begin{example}[Completely randomized design]\label{exam:CRD}
This treatment assignment scheme randomly assigns \emph{prefixed} numbers of units to the different arms of the study with uniform probability. Suppose that at each stage a completely randomized design is conducted independently, and let $\treatment_k^{(t)}$ be the treatment assignment of stage $k$ in $\treatment^{(t)}$. Similarly to the Bernoulli trial, we can choose a window-size $h_k \in \{2,\ldots,\lvert R_k \rvert\}$ for each stage $k$ and randomly select $h_k$ entries from $\treatment_k^{(t)}$. Randomly shuffling these $h_k$ entries generates a new proposal $\treatment_k^*$ and simultaneously preserves the condition for fixed numbers of participants in every arm. This last step of shuffling corresponds to sampling from the conditional distribution $\Prob(Z=\cdot \mid R,X_R,Y_R(\cdot), Z_{A^C} = Z^{(t-1)}_{A^C})$.
\end{example}

Markov Chain Monte Carlo algorithms typically involve hyper-parameters, such as the window size(s) in \Cref{exam:Bernoulli.trial,exam:CRD}, which need to be tuned to get a fast convergence to the stationary distribution. In this article, we choose them so that they maximize the Mean Squared Euclidean Jump Distance (MSEJD) defined as
\begin{align*}
    S^2 = \frac{1}{M-1} \sum_{t=1}^{M-1} \|\treatment^{(t+1)} - \treatment^{(t)}\|_2^2.
\end{align*}
This (empirical) efficiency metric is equivalent to a weighted sum of the lag-$1$ auto-correlation for stationary chains with finite posterior variance \parencite{sherlock2010random} and a higher MSEJD value indicates greater efficiency.
We remark that for binary treatments, $S^2$ is equivalent to the average Hamming distance between adjacent samples $\treatment^{(t)}$ and $\treatment^{(t+1)}$, $t \in [M-1]$.
Compared to other metrics for choosing hyper-parameters, e.g.\ the rejection probability of a new proposal, the significance of MSEJD lies in its ability to not only account for the frequency of acceptance but also capture the magnitude of the differences $\treatment^* - \treatment^{(t)}$ among accepted proposals $\treatment^*$.


\subsubsection{Theory}
As explained in the previous subsection, we need to specify the step size (window size) of the RWM sampler via the probability measure $\rho$. This hyper-parameter affects the convergence speed of the Markov chain but may additionally impact the limiting distribution it converges towards. If the support of the selective randomization distribution consists of disconnected sets and a small step size is chosen, the random walk may not be able to reach all such sets and consequently does not converge to the target distribution $\Prob(Z = \cdot \mid R,X_R,Y_R(\cdot), S(Z), G(Z))$. Yet, even in this case, the Monte Carlo approximation converges to a valid selective p-value.

To show this result, we first introduce the concept of \emph{communication}: For a Markov chain with state space $\calZ$ we say that $z\in \calZ$ and $z' \in \calZ$ communicate if and only if $z$ can be reached from $z'$ in finitely many steps of the Markov chain with positive probability and vice versa. We denote the resulting equivalence relation $\equiv_C$ and define the communication class of a state $z \in \calZ$ as follows $C_{s,g}(z) := \{z' \in \calZ_{s,g}\colon z \equiv_C z'\}$, where $s \in \calS$ and $g \in \calG$.
Analogously to Definition~\ref{def:rand-p-value}, we can define the selective randomization p-value that additionally conditions on the communication class of the starting point of a Markov chain.

\begin{definition}\label{def:com-class-p-value}
    In the setting of \Cref{def:rand-p-value}, let $Z^*$ have the same distribution as $Z$ and fulfill $Z^*\indep Z\,\vert\,R,X_R,Y_R(\cdot)$. Then, the selective randomization p-value conditional on the communication class of $Z$ is given by
    \begin{multline*}
\tilde{P}_{\mathrm{sel}}:=\Prob\big(T(Z^*,R,X_R,Y_R(\cdot)) \leq T(Z,R,X_R,Y_R(\cdot)) \,\,\big\vert\,\, R,X_R,Y_R(\cdot), Z, S(Z^*) = S(Z), G(Z^*) = G(Z),\\[-2ex]
Z^* \in C_{S(Z),G(Z)}(Z)\big).
    \end{multline*}
\end{definition}
\begin{remark}
    Conditioning on $S(Z^*) = S(Z)$ and $G(Z^*) = G(Z)$ in the equation above is not necessary because these conditions follow from $Z^* \in C_{S(Z),G(Z)}$. Nonetheless, we keep $S$ and $G$ to emphasize that $\tilde{P}_{\mathrm{sel}}$ involves additionally conditioning.
\end{remark}

This modified selective randomization p-value indeed preserves the selective type-I error control.
\begin{proposition}
    Let $\alpha \in [0,1]$. The selective randomization p-value as given in Definition \ref{def:com-class-p-value} stochastically dominates the uniform distribution on $[0,1]$:
    \begin{equation*}
        \Prob(\,\tilde{P}_{\mathrm{sel}} \leq \alpha \,\vert\, R,X_R,Y_R(\cdot), S(Z)=s, G(Z)=g, Z \in C_{s,g}(z)) \leq \alpha\quad \forall\, s\in \calS, g\in \calG, z \in \calZ.
    \end{equation*}
    Consequently, a test which rejects the selected null hypothesis when $\tilde{P}_{\text{sel}} \leq \alpha$ controls the selective type-I error, i.e.
    \begin{equation*}
        \Prob(\,\tilde{P}_{\text{sel}} \leq \alpha \mid R,X_R,Y_R(\cdot), S(Z)=s) \leq \alpha\quad \forall\, s\in \calS.
    \end{equation*}
\end{proposition}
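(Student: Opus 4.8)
The plan is to identify $\tilde{P}_{\text{sel}}$ as another instance of the conditional randomization p-value of \textcite{zhang_2023_randomization_test}, now attached to a partition of $\calZ$ that is \emph{finer} than the one used in \Cref{thm:type-i-error}; the two displayed inequalities then follow by the same mechanism as there, namely \textcite[Thm.\ 1]{zhang_2023_randomization_test} together with the tower rule.

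First I would verify that the communication classes form a genuine partition of $\calZ$ that refines the partition into preimages of $(S,G)$. Because \Cref{algo:MCMC} rejects every proposal lying outside $\treatmentSpace_{S(Z),G(Z)}$, the chain started at any state $z$ never leaves $\treatmentSpace_{S(z),G(z)}$; hence $z\equiv_C z'$ forces $S(z)=S(z')$ and $G(z)=G(z')$, so every communication class sits inside a single cell $\treatmentSpace_{s,g}$. Since $\equiv_C$ is reflexive (zero steps), symmetric and transitive, the sets $C_{S(z),G(z)}(z)$, $z\in\calZ$, partition $\calZ$; writing $C(z):=C_{S(z),G(z)}(z)$, this is a deterministic function of $z$, constant on each class, whose level sets refine those of $(S,G)$. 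For each fixed value $W=w$ this is a fixed partition, and its $w$-dependence is no different from that of the $(S,G)$-partition already used in \Cref{thm:type-i-error}, causing no trouble since $W$ is conditioned on throughout.

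Given $S(Z^*)=S(Z)$ and $G(Z^*)=G(Z)$, the event $\{Z^*\in C_{S,G}(Z)\}$ coincides with $\{C(Z^*)=C(Z)\}$, and in fact $\{Z^*\in C_{S,G}(Z)\}$ alone already implies the matching of $S$ and $G$. Hence \Cref{def:com-class-p-value} reads
\begin{equation*}
    \tilde{P}_{\text{sel}} = \Prob\big(T(Z^*,W)\le T(Z,W)\,\big|\,W,Z,\,C(Z^*)=C(Z)\big),
\end{equation*}
which is exactly the conditional randomization p-value of \textcite{zhang_2023_randomization_test} for the communication-class partition of $\calZ$ (cf.\ \textcite[Prop.\ 1]{zhang_2023_randomization_test}). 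Their Theorem~1 therefore yields, for each communication class $c$,
\begin{equation*}
    \Prob\big(\tilde{P}_{\text{sel}}\le\alpha \,\big|\, W,\,Z\in c\big)\le\alpha.
\end{equation*}
Conditioning on $\{S(Z)=s,\,G(Z)=g,\,Z^*\in C_{S,G}(Z)\}$ is the same as conditioning that $Z$ and $Z^*$ fall in one common communication class inside $\treatmentSpace_{s,g}$, and within such a class the extra conditioning on $Z^*$ is immaterial because $Z^*\indep Z\mid W$ while $\tilde{P}_{\text{sel}}$ is a function of $(Z,W)$ only; hence the left side of the first displayed inequality is a convex combination (over classes $c\subseteq\treatmentSpace_{s,g}$) of the quantities $\Prob(\tilde{P}_{\text{sel}}\le\alpha\mid W,Z\in c)\le\alpha$, and is thus $\le\alpha$. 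The second inequality then follows by the tower rule exactly as in the proof of \Cref{thm:type-i-error}: the events $\{Z\in c\}$ with $c\subseteq\treatmentSpace_{s,g}$ for some $g$ partition $\{S(Z)=s\}$, so $\Prob(\tilde{P}_{\text{sel}}\le\alpha\mid W,S(Z)=s)$ is again a convex combination of terms $\le\alpha$.

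The main obstacle will be the bookkeeping in the second paragraph — checking that the communication-class partition is well defined and genuinely refines the $(S,G)$-partition, in particular that \Cref{algo:MCMC} never hops between two distinct $(S,G)$-cells — after which everything reduces to quoting \textcite[Thm.\ 1]{zhang_2023_randomization_test}, just as \Cref{thm:type-i-error} does.
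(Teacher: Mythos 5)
Your proposal is correct and follows essentially the same route as the paper, which likewise proves the result by identifying $\tilde{P}_{\text{sel}}$ as a conditional randomization p-value for the partition of $\calZ$ generated by the preimages of $(S,G)$ together with the equivalence relation $\equiv_C$, and then invoking Theorem~1 of \textcite{zhang_2023_randomization_test}. The only difference is that you spell out the verification that the communication classes refine the $(S,G)$-partition and the tower-rule step, which the paper leaves implicit in its one-line citation.
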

\begin{proof}
    See \textcite[Thm.\ 1]{zhang_2023_randomization_test}, where the partition of the treatment assignment space $\calZ$ is given by the preimages of $(S,G)$ and the equivalence relation $\equiv_C$, cf.\ \textcite[Prop.\ 1]{zhang_2023_randomization_test}.
\end{proof}
Having established the validity of the modified selective randomization p-value, we now show that the Monte Carlo approximation based on RWM samples indeed converges to $\tilde{P}_{\mathrm{sel}}$.
\begin{proposition}\label{prop:reversible}
    The Markov chain defined in Algorithm~\ref{algo:MCMC} and restricted to the state space $C_{S(Z),G(Z)}(Z)$ is aperiodic, irreducible and reversible. Its stationary distribution is $\Prob(Z^* = \cdot \mid R,X_R,Y_R(\cdot), Z^* \in C_{S(Z),G(Z)}(Z))$, where $Z^*$ has the same distribution as $Z$ and $Z^* \indep Z \mid R,X_R,Y_R(\cdot)$.
    In addition, there exists {$\sigma^2(\treatment, R,X_R,Y_R(\cdot))~<~\infty$} such that
    \begin{equation*}
        \sqrt{M}(\hat{P}_M - \tilde{P}_\mathrm{sel})
    \mid \treatment , R,X_R,Y_R(\cdot) \,\stackrel{d}{\to}\,
        \calN\left(0, \sigma^2(\treatment, R,X_R,Y_R(\cdot))\right), \quad \text{as } M \to \infty.
    \end{equation*}
\end{proposition}


\begin{proof}\label{proof:prop:reversible}
For any $z$ in the state space, the probability of staying at itself is positive, and thus the Markov Chain is
aperiodic. Moreover, it is irreducible since the state space forms one communication class.

To ease readability, we use the notation
$\pi(z) := \Prob(Z^* = z \mid R,X_R,Y_R(\cdot), Z^* \in C)$, where $z \in \calZ$ and $C:=C_{S(Z),G(Z)}(Z)$, in the following; furthermore, $P(z,z')$ denotes the transition probability of moving from state $z$ to~$z'$. The proposal distribution has the probability mass function
\begin{equation*}
    f(z'\mid z) = \sum_{A \subseteq R} \Prob(Z^* = z' \mid R,X_R,Y_R(\cdot), Z^*_{A^C} = z_{A^C})\, \rho(A)\qquad \text{for } z \in C, z' \in \calZ.
\end{equation*}
Hence, the transition probability from $z$ to $z'$ is given by
\begin{equation*}
    P(z,z') = f(z'\mid z)\, \one\{z' \in C\}\, +\, \one\{z=z'\}\!\sum_{z'' \in\calZ}
    \one\{z'' \notin C\}\, f(z''\mid z),
\end{equation*}
where the second term captures the probability of $f(\cdot \mid z)$ proposing a state outside of $C$.

If the Markov chain fulfills the detailed balance equation
\begin{equation}\label{eq:detailed-balance}
    \pi(z)\, P(z,z') = \pi(z') \, P(z',z)\qquad
    \text{for all }z,z' \in C,
\end{equation}
then it is reversible and $\pi$ is its (unique) stationary distribution \parencite[Cor.\ 1.17, Prop.\ 1.20]{levin2017markov}. For $z = z'$, \eqref{eq:detailed-balance} is trivially true; hence, we can assume $z \neq z'$ without loss of generality. We obtain the detailed balance equation as follows
\begin{align*}
    \frac{P(z,z')}{P(z',z)} &=
    \frac{\sum_{A \subseteq R} \Prob(Z^* = z' \mid R,X_R,Y_R(\cdot), Z^*_{A^C} = z_{A^C})\, \rho(A)\,\one\{z' \in C\}}{\sum_{A \subseteq R} \Prob(Z^* = z \mid R,X_R,Y_R(\cdot), Z^*_{A^C} = z_{A^C}')\, \rho(A)\,\one\{z \in C\}}\\[1.5ex]
    &= \frac{\sum_{A \subseteq R}
    \Prob(Z^*=z'\mid R,X_R,Y_R(\cdot))\,/\,\Prob(Z^*_{A^C}=z_{A^C}\mid R,X_R,Y_R(\cdot))\, \one\{z'_{A^C}=z_{A^C}\}\,
    \rho(A)\,\one\{z' \in C\}}{\sum_{A \subseteq R}
    \Prob(Z^*=z \mid R,X_R,Y_R(\cdot))\,/\,\Prob(Z^*_{A^C}=z'_{A^C}\mid R,X_R,Y_R(\cdot))\, \one\{z_{A^C}=z'_{A^C}\}\,
    \rho(A)\,\one\{z \in C\}}\\[1.5ex]
    &= \frac{\Prob(Z^*=z'\mid R,X_R,Y_R(\cdot))\,\one\{z'\in C\}}{\Prob(Z^*=z\mid R,X_R,Y_R(\cdot))\,\one\{z\in C\}} \cdot \frac{\sum_{A \subseteq R}
    1/\Prob(Z^*_{A^C}=z_{A^C}\mid R,X_R,Y_R(\cdot))\, \one\{z'_{A^C}=z_{A^C}\}\,
    \rho(A)}{\sum_{A \subseteq R}
    1/\Prob(Z^*_{A^C}=z'_{A^C}\mid R,X_R,Y_R(\cdot))\, \one\{z_{A^C}=z'_{A^C}\}\,
    \rho(A)}\\[1.5ex]
    &= \frac{\Prob(Z^*=z'\mid R,X_R,Y_R(\cdot))\,\one\{z'\in C\}}{\Prob(Z^*=z\mid R,X_R,Y_R(\cdot))\,\one\{z\in C\}} \cdot \frac{\Prob(Z^* \in C \mid R,X_R,Y_R(\cdot))}{\Prob(Z^* \in C \mid R,X_R,Y_R(\cdot))} \cdot 1 = \frac{\pi(z')}{\pi(z)}.
\end{align*}

Lastly, since the Markov Chain is aperiodic and irreducible, it converges to its stationary distribution $\pi$ and is geometrically ergodic, see \textcite[Theorem 4.9]{levin2017markov}. Moreover, any geometrically ergodic reversible Markov chain admits a central limit theorem for all functions with a finite second moment with respect to the stationary distribution \parencite{gilks1996markov}. Since the indicator function $\one{\{T(\treatmentElem^*, R,X_R,Y_R(\cdot)) \leq T(\treatment, R,X_R,Y_R(\cdot))\}}$ is bounded, we obtain the desired convergence result for the Monte Carlo approximation of $\tilde{P}_{\mathrm{sel}}$.
\end{proof}

\subsection{Constructing Confidence Intervals}\label{sec:compute.confidence.interval}

Here, we discuss the computational details of constructing confidence sets via inversion of tests as outlined in \eqref{eq:confidence-set}.
Let $\calI$ denote a user-specified interval of potential constant treatment effects and suppose $\calI$ is of finite length.
For a (possibly non-monotone) selective randomization p-value, one can discretize $\calI$ into bins of width $\varepsilon > 0$, and evaluate the p-value at the representative value of each bin.
The confidence set shall comprise all bins associated with p-values above the significant level.
The testing of different representative values of $\tau$ can be computed independently, allowing the entire procedure to be accelerated through parallel computing.
If the p-values are monotone, the Robbins-Monro algorithm and the bisection method can be employed to establish the lower and upper limits of this interval in a more efficient way \parencite{garthwaite1996confidence, wang_randomization_2020}. Moreover, we can reduce the number of Monte Carlo samples by using sequential p-values \parencite{besag_sequential_1991,fay_using_2007,silva_optimal_2013,fischer_sequential_2024,howes_markov_2024}. These methods allow to reduce the number of samples if it becomes clear early on that the estimate of the p-value is far away from the levels of interest, that is $\alpha/2$ and $1-\alpha/2$ for confidence intervals and $1/2$ for estimation, and hence does not require a high resolution.

\section{Details of the Simulation Study on Hold-out Units}\label{app:details-hold-out}
We use the data-generating mechanism of the two-stage enrichment trial as described in Section~\ref{sec:simulation-study} and modify it as follows.

First, we consider the common scenario that there is treatment effect heterogeneity between groups, cf.\ \textcite{marston_predicting_2020}. To this end, we generate the potential outcomes as $Y_i(1) = Y_i(0)+1$ for participants with high genetic risk and as $Y_i(1) = Y_i(0)$ for participants with low genetic risk; $Y_i(0)$ are i.i.d.\ draws from a standard normal distribution. Hence, the treatment effect in the former group is $1$ and $0$ in the latter.

Second, we lower the number of recruited patients to 16 in the first and second stage each. This allows us to precisely compute the p-values for different treatment effects $\tau$ without the need for a Monte Carlo approximation. Therefore, we can exclude numerical errors as a cause of our findings.

Third, we employ two different selection rules. For the study with hold-out units, we decide after the first stage for which group(s) we test the (partially) sharp null hypothesis. Hence, we use $S$ as defined in~\eqref{eq:selection-rule} with data from the first stage. Thus, the selection depends on $Z_1$ but not $Z_2$. For the study without hold-out units, we choose the null hypothesis only after completing the second stage. Hence, we compute $S$ according to~\eqref{eq:selection-rule} with data from both stages. Consequently, $S$ depends on $Z_1$ and $Z_2$ in this scenario.

We simulate 10 different datasets. Since the true treatment effect in the group with high genetic risk is larger, we observe that the selection rules choose this group in the majority of cases. Since we employ two different rules, they may disagree, however. In order to make a fair comparison between the two adaptive designs, we only consider scenarios where both rules choose to test the null hypothesis for the group with high genetic risk. Among the remaining datasets, we compute the p-value curve via discretizing the interval $[-2, 3]$. Figure~\ref{fig:hold-out} depicts the most jagged p-value curve we found. The source code of this simulation study is available on \url{https://github.com/tobias-freidling/selective-randomization-inference}.

\section{Additional Results on the Numerical Experiments}\label{app:additional-results-num-exp}

\color{black}
\subsection{Test Statistic}\label{app:test.statistic}

    For the choice of test statistic, standard estimators of average treatment effect, such as the inverse propensity weighting estimator or its normalized variant, provide decent power.
    Furthermore, we emphasize that the test statistic can depend on
    the intermediate selection. For example, in the running example of
    a two-stage experiment, the test statistic can be constructed to depend on the selection made after the first stage.
    Particularly, we suggest up-weighting the units in the second stage when the first-stage selection is extreme.
    The rationale is that if the first-stage selection is extreme, a substantial amount of information has already been consumed by the selection. Consequently, the information remaining for inference, following the data-carving perspective \parencite{fithian_optimal_2017}, is reduced and should therefore be down-weighted.

    We illustrate selection-dependent test statistics below using a simulated example.
    We adopt the simulation setup described in the simulation section of the manuscript. Recall that the setup involves two subgroups ($X_i \in \{\text{low}, \text{high}\}$) and a two-stage recruitment procedure based on the first-stage normalized difference in ATE between the two groups.
    The second-stage recruitment rule is as follows: if one subgroup's ATE is substantially larger than the other, only that subgroup is recruited; otherwise, both subgroups are recruited.
    We compare several test statistics that interpolate between first-stage-only and second-stage-only tests.
    Let $w \in [0, 1]$ be a weight parameter, and let $T_1$ and $T_2$ denote the standardized difference in means estimator of ATE from the first and second stages, respectively. Here the estimator is computed based on the subgroup selected in the first stage.
     \begin{figure}[h!]
  \centering
  \includegraphics[scale=0.8]{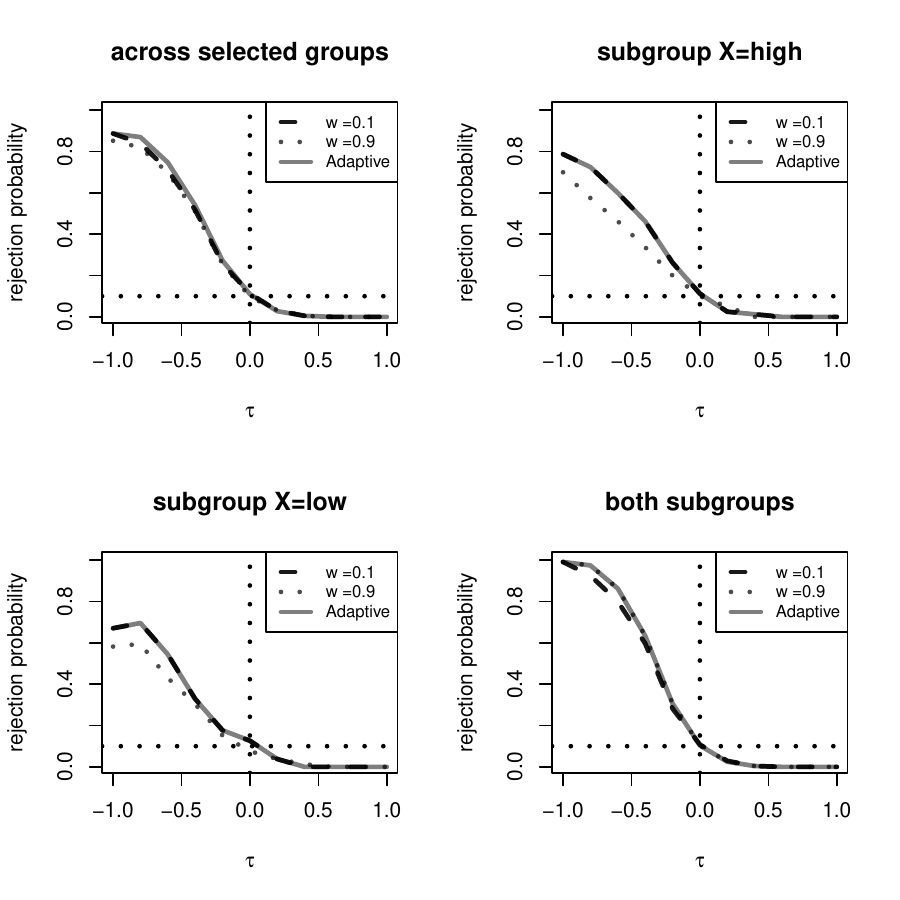}
  \caption{Power comparison of various test statistics. All test statistics control the conditional type I error at the target level ($0.1$) when $\tau = 0$.
  $T_{0.1}$ (dashed) is more powerful for extreme selections (subgroup $X=$ high or low), while $T_{0.9}$ (dotted) performs better when both subgroups are selected.
  The adaptive statistic $T_{\mathrm{ada}}$ combines the strengths of both and achieves the highest aggregated power across selections.
  }
  \label{fig:default_weight_power}
\end{figure}
    We consider three test statistics. $T_{0.1}$ and $T_{0.9}$, which are defined as
    \begin{align*}
        T_{0.1} &= \sqrt{0.1}\,T_1 + \sqrt{0.9}\,T_2,\qquad \text{where } w=0.1,\\
        T_{0.9} &= \sqrt{0.9}\,T_1 + \sqrt{0.1}\,T_2,\qquad \text{where } w=0.9,
    \end{align*}
    are selection-agnostic and respectively up- and down-weight the units in the second stage. Furthermore, we consider the adaptive statistic
    \begin{equation*}
        T_{\mathrm{ada}} =
        \begin{cases}
            \sqrt{0.1}\,T_1 + \sqrt{0.9}\,T_2, & \text{if one subgroup is selected}, \\
            \sqrt{0.9}\,T_1 + \sqrt{0.1}\,T_2, & \text{if both subgroups are selected}.
        \end{cases}
    \end{equation*}
    This statistic depends on the selection: if the first-stage selection is extreme, i.e.,  only one subgroup is selected, a certain amount of information has already been ``used'' in the selection step, so the second stage should be given greater weight in the inference.




    We provide the power plots in \Cref{fig:default_weight_power}. Regardless of the choice of test statistic, all methods control the type I error after selection, as the rejection probability remains at the target level $0.1$ when $\tau = 0$.
    In terms of power, the statistic $T_{0.1}$ (dashed), which places more weight on the second-stage data, is more powerful than $T_{0.9}$ (dotted) when the selection is extreme (e.g., subgroup $X=$ high or subgroup $X=$ low). In contrast, $T_{0.9}$, which emphasizes the first-stage data of a larger sample size, outperforms $T_{0.1}$ when both subgroups are selected (mild selection).
    The adaptive statistic $T_{\mathrm{ada}}$ chooses the more powerful statistic between $T_{0.1}$ and $T_{0.9}$ conditional on the selection event, yielding higher aggregated power across selections.
    We comment that, if the gap between $T_{0.9}$ and $T_{0.1}$ widens in the ``both subgroups'' scenario, the adaptive test statistic can achieve even greater overall power compared with $T_{0.9}$ and $T_{0.1}$.

\color{black}
\subsection{Window Size of the RWM Sampler}\label{appe:sec:figures}

\begin{figure}[h]
        \centering
        \begin{minipage}{0.96\textwidth}
                \centering
                \includegraphics[clip, trim = 0.5cm 0cm 0cm 0cm, width = \textwidth]{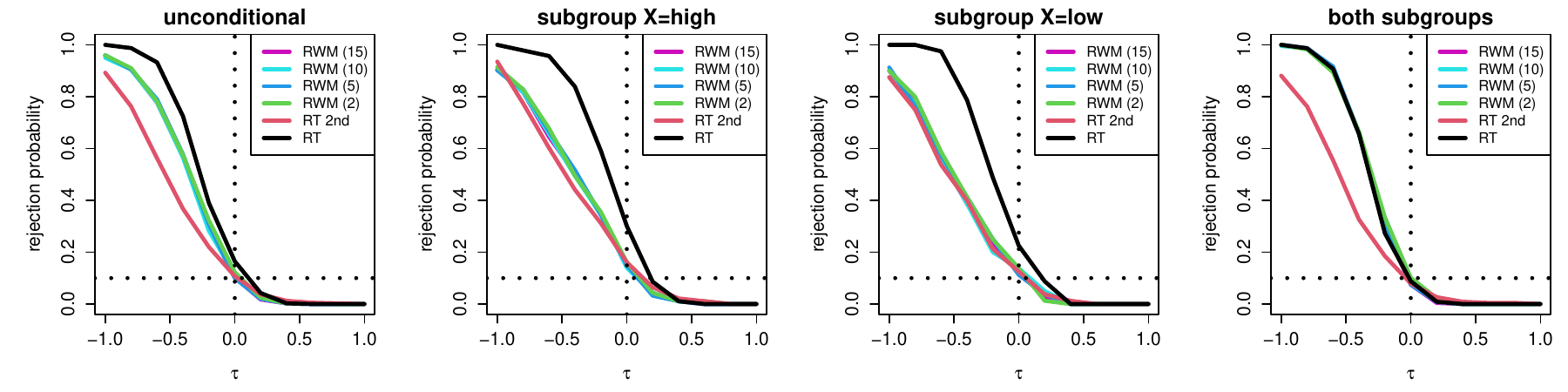}
                \subcaption[(a)]{Rejection probability.}
        \end{minipage}
        \begin{minipage}{0.96\textwidth}
                \centering
                \includegraphics[clip, trim = 0.5cm 0cm 0cm 0cm, width = \textwidth]{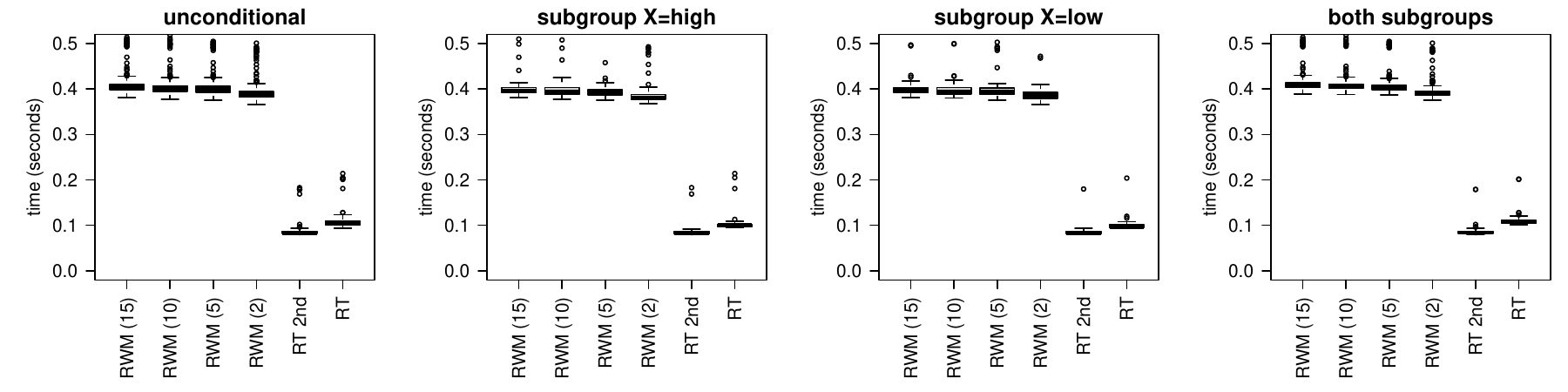}
                \subcaption[(b)]{Computation time.}
        \end{minipage}
    \caption{Comparison of RWM sampler with different window sizes regarding rejection probability and computation time. We include RT and RT 2nd for reference and exclude rejection sampling due to its slow computation speed.}
    \label{fig:window.size}
\end{figure}

We consider the simulation study of Section~\ref{sec:simulation-study} and vary the window size of the RWM sampler taking values in $\{2, 5, 10, 15\}$. We plot the rejection probability for different null hypotheses as well as the computation time in \Cref{fig:window.size}.

The power comparison of SRT, RT 2nd, and RT is similar to that in \Cref{sec:simulation-study}. Moreover, different window sizes yield almost identical approximations. The running time of the RWM sampler largely remains the same (or increases only mildly) as the window size increases. This is because -- regardless of whether the random walk transitions to the new proposal or not -- every iteration of the algorithm generates a new sample, in contrast to the rejection sampling algorithm.
The main difference in computation time across window sizes is just a
consequence that permuting $2$ elements is less expensive than
permuting $10$.
The RWM sampler is slower than RT 2nd and RT due to the extra step verifying whether the proposal is acceptable.

\subsection{Confidence Intervals}\label{app:confidence-intervals}
\begin{table}[htb]
\centering
\caption{
Empirical coverage of one-sided confidence intervals based on different randomization tests (significance level $\alpha=0.1$).
}\label{tab:coverage}
\begin{tabular}{lc|cccc}
\toprule
 & Frequency &  SRT-RWM & SRT-RS   & RT 2nd & RT \\
\midrule
 High risk group selected & 20\% & 0.861 & 0.873 & 0.873 & 0.722\\
 Low risk group selected & 20\% & 0.921 & 0.934 & 0.949 & 0.795\\
 Both groups selected & 60\% & 0.909 & 0.918  & 0.909  & 0.922 \\
 \midrule
 Average for the selected group(s) & & 0.910  & 0.920  & 0.910 & 0.858 \\
\bottomrule
\end{tabular}
\end{table}

\begin{figure}[htb]
    \centering

                \includegraphics[clip, trim = 0.5cm 0cm 0cm 0cm, width = \textwidth]{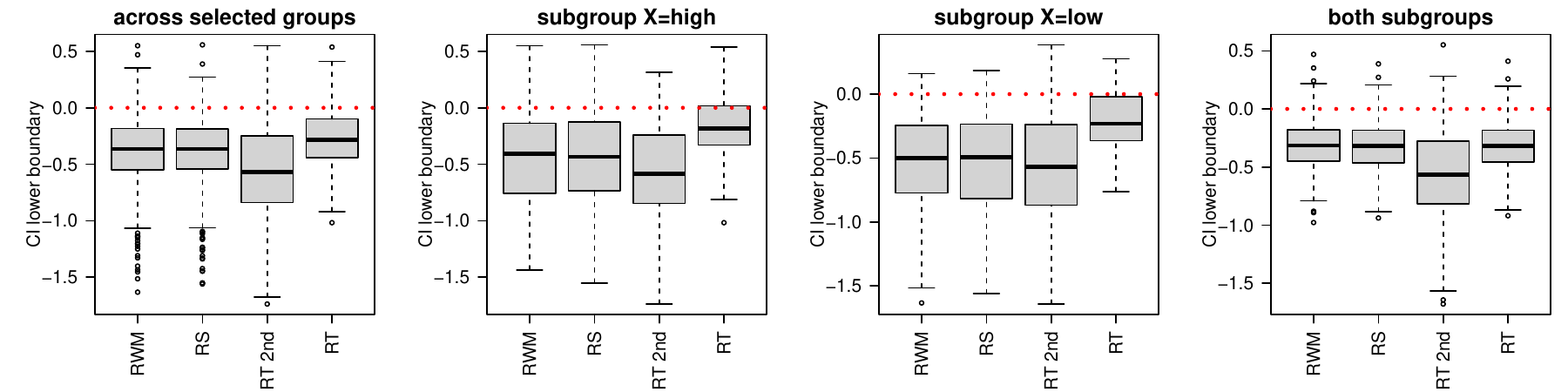}
    \caption{Distribution of the lower boundary of one-sided confidence intervals obtained from different randomization tests.
    }
    \label{fig:default.confidence.interval}
\end{figure}

In the context of the simulation study of Section~\ref{sec:simulation-study}, we construct one-sided confidence intervals for the homogeneous treatment effect $\tau$. To this end, we specify an interval of treatment effects, discretize it and conduct a hypothesis test for each resulting bin. Simulating multiple datasets, we can compare the empirical coverage and length of the obtained confidence intervals.

In Table~\ref{tab:coverage}, we compare the naive randomization test (RT), the second stage randomization test (RT 2nd) as well as our proposed selective randomization test (SRT) computed either via rejection sampling (RS) or the random walk Metropolis-Hastings (RWM) sampler. We notice that the empirical coverage of RT 2nd and SRT is close to the nominal value of 90\% whereas RT yields lower empirical coverage, especially when only one subgroup is selected.

In Figure~\ref{fig:default.confidence.interval}, we compare the distribution of the lower boundary of the one-sided confidence intervals obtained from SRT, RT 2nd and RT. The naive randomization test yields the tightest confidence intervals; yet, this comes at the cost of losing type-I error control as demonstrated in the previous paragraph. Among RT 2nd and SRT, the confidence intervals computed via the selective randomization test are on average narrower. In particular, when both subgroups were selected, the lower boundary of the confidence interval from SRT and RT are quite similar.

\color{black}
\subsection{First-stage only experiments}\label{app:first.stage.only.test}

In Figure~\ref{fig:default_CRT1st_power} below, we provide the conditional randomization test based solely on the first stage (SRT(1st) in black) using the data generation mechanism of \Cref{fig:default} in the main text. We observe that when only one of the two subgroups is selected (panel 2 and 3), the number of acceptable treatments of the first stage is very small for sharp nulls with considerably small $\tau$. Consequently, there is a significant decay in the associated rejection probability (power).

In our simulation (\Cref{fig:default} in the main text), there is a second stage after the selection is made (based on the first-stage observations), and all feasible second-stage treatment assignments are acceptable.
Hence, all units in the second stage serve as hold-out units.
As a result, even if there are only few acceptable treatment assignments $z^*_1$ for the first stage left after conditioning on the observed selection $S=S(Z_1)$, there are always sufficiently many treatment assignments $z^*_2$ in the second stage to draw inference. In other words, since we use all the information that is left after conditioning on $S$, which was named `data carving' by \textcite{fithian_optimal_2017}, we can use the full information from the second stage for inference. Therefore, we do not observe the power issue in the first-stage-only analysis.

\begin{figure}[htb]
  \centering
  \includegraphics[scale=0.8]{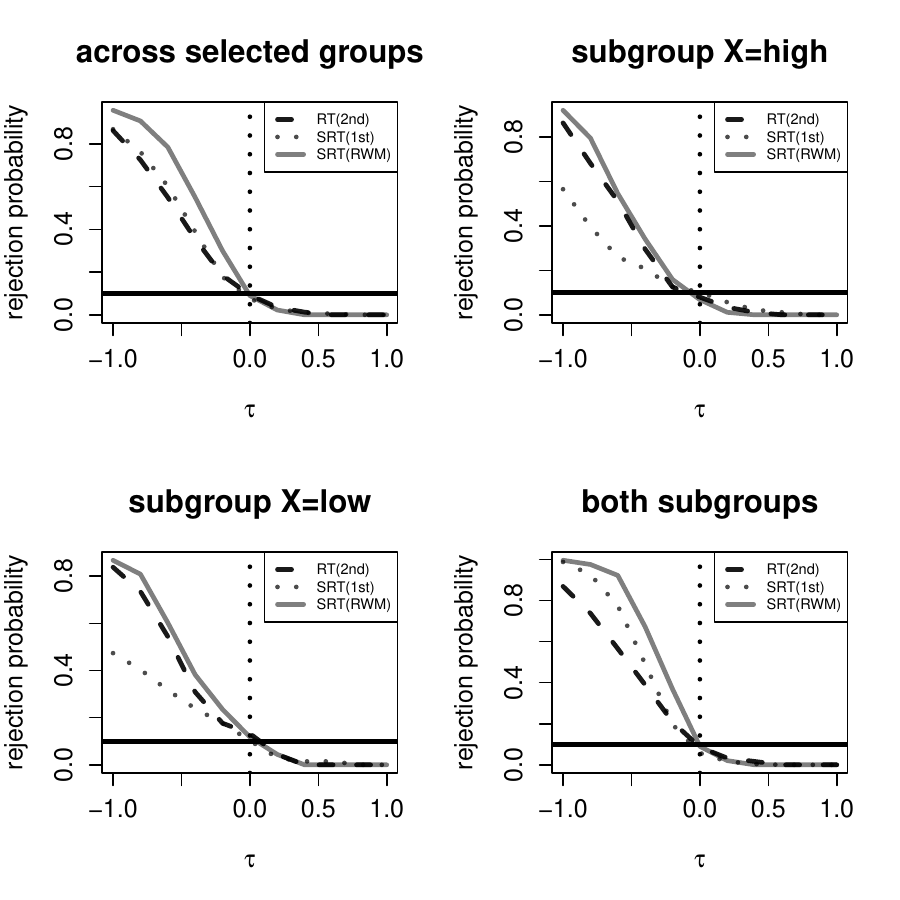}
  \caption{\color{black}Power comparison of different conditional randomization tests. RT(2nd) uses only the second stage, SRT(1st) conditions on the first stage, and SRT(RWM) implements the proposed selective randomization test. When only one subgroup is selected (subgroup $X=$ high or low), SRT(1st) suffers from low power because very few first-stage assignments remain acceptable under the sharp null $Y_i(1) - Y_i(0) = \tau$, $\tau < 0$. In contrast, SRT(RWM) and RT(2nd) maintain substantially higher power thanks to a large number of acceptable alternative treatment assignments of the second stage.}
  \label{fig:default_CRT1st_power}
    \end{figure}

\clearpage

\printbibliography

\end{document}